\renewcommand\autoref[1]{\cref{#1}}
\title{A Measure of the Complexity of Neural Representations based on Partial Information Decomposition}
\author{\name David A. Ehrlich$^{1,2,*}$ \email davidalexander.ehrlich@uni-goettingen.de \\
\name Andreas C. Schneider$^{3,2,*}$ \email andreas.schneider@ds.mpg.de \\
\name Viola Priesemann$^{2,3,4}$ \\
\name Michael Wibral$^{1,4}$ \\
\name Abdullah Makkeh$^{1}$ \AND
\addr $^1$Goettingen Campus Institute for Dynamics of Biological Networks, University of Goettingen, 37077 Goettingen, Germany \\
\addr $^2$Max Planck Institute for Dynamics and Self-Organization, 37077 Goettingen, Germany \\
\addr $^3$Institute for the Dynamics of Complex Systems, University of Goettingen, 37077 Goettingen, Germany \\
\addr $^4$Campus Institute Data Science, University of Goettingen, 37077 Goettingen, Germany
      }
\theoremstyle{definition}
\newtheorem{definition}{Definition}[section]
\newtheorem{example}{Example}[section]
\theoremstyle{plain}
\newtheorem{theorem}{Theorem}[section]
\newtheorem{lemma}{Lemma}[section]
\newtheorem{proposition}{Proposition}[section]
\begin{document}
      
\maketitle

\def\thefootnote{$*$}\footnotetext{These authors contributed equally to this work.}\def\thefootnote{\arabic{footnote}}

\begin{abstract}
In neural networks, task-relevant information is represented jointly by groups of neurons. However, the specific way in which this mutual information about the classification label is distributed among the individual neurons is not well understood: While parts of it may only be obtainable from specific single neurons, other parts are carried redundantly or synergistically by multiple neurons. We show how Partial Information Decomposition (PID), a recent extension of information theory, can disentangle these different contributions. From this, we introduce the measure of ``Representational Complexity'', which quantifies the difficulty of accessing information spread across multiple neurons. We show how this complexity is directly computable for smaller layers. For larger layers, we propose subsampling and coarse-graining procedures and prove corresponding bounds on the latter. Empirically, for quantized deep neural networks solving the MNIST and CIFAR10 tasks, we observe that representational complexity decreases both through successive hidden layers and over training, and compare the results to related measures. Overall, we propose representational complexity as a principled and interpretable summary statistic for analyzing the structure and evolution of neural representations and complex systems in general.
\end{abstract}


\clearpage
\section{Introduction}

\begin{figure}
    \centering
    \includegraphics[scale=1]{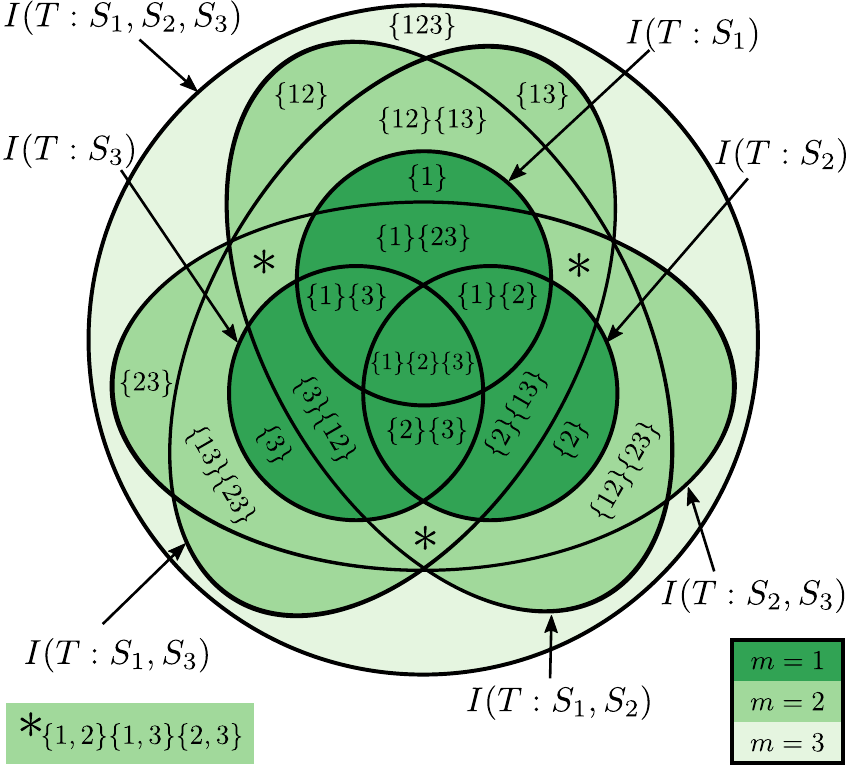}
    \caption{\textbf{Mereological diagram illustrating how the PID atoms $\Pi(T:\bm{S}_\alpha)$ constitute all classical mutual information quantities involving three source variables, as introduced by~\citet{williams_nonnegative_2010}.} Classical mutual information terms are represented by black ovals with general redundancies given by their overlaps. The PID atoms correspond to the contiguous areas of the overlaps and are labeled by their corresponding antichains $\alpha$. The green shading indicates the degree of synergy $m$ of the respective atoms.}
    \label{fig:degreeofsynergy}
\end{figure}

 Despite their tremendous success, the inner workings of artificial neural networks remain mostly elusive to this date \citep{lecun_deep_2015, alom_state---art_2019, samek_explainable_2017}. It is known that, since all of the information a network can use to solve a task must already be contained in the model's inputs, the purpose of hidden internal components remains to distill the pertinent features and make them available to later computational steps. However, the \emph{way} in which the features are represented among multiple neurons and how this representation changes throughout the training phase is not well understood \citep{li_convergent_2015, bengio_representation_2013}. This lack of understanding continues to pose a major obstacle to the adoption of machine learning techniques in critical settings, such as medical image analysis or self-driving cars, but also hinders principled development of better machine learning algorithms \citep{samek_explainable_2017, amodei_concrete_2016, eykholt_robust_2015}.

In a classification problem, the difficulty of extracting information about the target label from internal representations in hidden layers is a key factor for network performance and interpretability alike. As a first step towards these objectives, one needs to define and quantify this notion of \textit{difficulty}, which originates from complex higher-order statistical interrelations between neurons. These relations -- if present -- make it necessary to consider large groups of neurons at a time to be able to discriminate between certain represented target values. The number of neurons minimally needed to obtain a piece of target-discriminatory information is conceptually related to the classical question of how many neurons are active in the representation of an input sample, and may thus be viewed as an information-theoretic equivalent of a sparsity measure. We propose to quantify this `information sparsity' in a principled way by answering an intuitive question: On average, how many neurons do I need to observe simultaneously in order to gain access to a piece of information? Note that while we limit ourselves to classification problems in this paper, these questions pertain to a vast class of problems not limited to machine learning systems wherever information about some target variable is distributed amongst a set of equivalent information bearing units.

Note that while multiple notions of information exist in literature (e.g.~semantic information that has meaning for a biological or technical system) in this paper the term \emph{information} is used in a narrow sense referring to Shannon information only. Originally developed for the analysis of communication channels, Shannon's Information Theory~\citep{shannon_mathematical_1948} is a framework capturing general, possibly non-linear correlations between random variables. As such, it lends itself to the analysis of neural network as a high mutual information between the classification label and latent representations can be proved to be a necessary requirement for good classification accuracy via Fano's inequality in conjunction with the data processing inequality~\citep[Appendix IV]{geiger_information_2021}. The choice of Shannon Information over other non-semantic measures such as Renyi Entropy~\citep{renyi1961measures} or Tsallis Entropy~\citep{tsallis1988possible} has been made since the cross entropy loss function that is ubiquitous in classification networks uses Shannon information terms~\citep{goodfellow2016deep}, making it plausible that tracking information theoretic quantities can provide an insight into the learning dynamics.

From the perspective of information theory \citep{shannon_mathematical_1948}, neural networks are just information channels that forward the task-relevant part of the input to the output layer while shedding task-irrelevant parts \citep{tishby_deep_2015}. This said, the idea of using Shannon information to quantify the information in the activations of Deep Neural Network (DNN) hidden layers has been met with fierce debate after methodological flaws surfaced \citep{saxe_information_2019, goldfeld_estimating_2019} in the influential early works by Tishby et al.~\citep{tishby_deep_2015, shwartz-ziv_opening_2017}. We address these issues in this work and show how information theory can be applied to quantized networks in a theoretically sound way (\autoref{sec:related}).

Yet, even with such methodological problems out of the way, the question posed above on how features are represented cannot be answered using Shannon mutual information alone. This is due to the inability of pairwise information measures to disentangle redundant and synergistic contributions between multiple neurons \citep{williams_nonnegative_2010}, leading to inevitable double- or under-counting of information contributions. Non-overlapping information contributions can be obtained by decomposing the mutual information $I(T:\bm{S})$ between a target random variable $T$ and several source variables $\bm{S} = (S_1, \dots, S_n)$ into its information \textit{atoms} $\Pi(T:\bm{S}_\alpha)$ such that
\begin{equation}
    I(T:\bm{S}) = \sum_\alpha \Pi(T:\bm{S}_\alpha),
    \label{eq:mi_decomposition}
\end{equation}
which has been made possible by the introduction of a recent extension to information theory known as Partial Information Decomposition (PID) \citep{williams_nonnegative_2010}. These atoms are the smallest set of quantities from which the mutual information between $T$ and all possible subsets of $\bm S$ can be combined (\autoref{fig:degreeofsynergy}) \citep{gutknecht_bits_2021}. Note that here, each information atom is referred to by its corresponding `antichain' $\alpha$, which will be explained in \autoref{sec:pid}.

From the PID atoms, we can answer the question of how many neurons are needed on average to access a piece of information in a principled way by introducing the summary statistic we term ``Representational Complexity''. A representational complexity of $C=1$ means that all information can be obtained from single neurons, while a representational complexity $C$ that is equal to the number of neurons in a layer means that no information can be obtained about the target unless one has access to all neurons. By tracking the representational complexity $C$ of the hidden layers over training we shed light upon a key aspect of how the internal representations of the target variable evolve during the training phase. This understanding may subsequently be employed to improve network designs by providing a tool to compare the dynamics in different network architectures.

The main contributions of this work are \textbf{(i)} describing a principled approach for applying PID to analyze the representations in DNNs and related systems, \textbf{(ii)} the introduction of representational complexity as a measure of information sparsity, \textbf{(iii)} discussing subsampling and coarse-graining procedures for the estimation of representational complexity and proofs of bounds on the latter, \textbf{(iv)} empirical results in quantized DNNs showing a decrease in representational complexity over training and through successive hidden layers, \textbf{(v)}~comparisons of the empirical results to two related approaches and \textbf{(vi)} empirical results showing a difference between representational complexity computed on the train or test dataset.

\section{Related works}
\label{sec:related}

\citet{tishby_deep_2015} were among the first to attempt to analyze deep neural networks from an information-theoretic perspective. Information theory, which was developed for the analysis of noisy information channels \citep{shannon_mathematical_1948}, allows to quantify information in an observer-independent way and helps to form a clear criterion of relevancy of information in hidden representations by distinguishing between the mutual information of layer activations with the ground truth classification labels and with the input variables. Viewing the networks as a sequence of such channels, \citet{shwartz-ziv_opening_2017} computed mutual information from \textit{binned} activations of the individual hidden layers, and plotted the resulting trajectory in what they termed the \textit{information plane}. This information plane is a two-dimensional space with the mutual information between hidden layer activations and input, and between hidden layer activations and ground-truth label as its two dimensions.

However, their claim to estimate actual mutual information quantities of the network variables with their binning approach was later shown to be unfounded \citep{saxe_information_2019, goldfeld_estimating_2019}. In fact, the true mutual information between the network's inputs or label and a hidden layer is either infinite, in the case of continuous features, or constant and equal to the finite entropy of the discrete inputs or labels \citep{saxe_information_2019, goldfeld_estimating_2019, geiger_information_2021}. This is due to the fact that the network itself defines a deterministic and almost always injective mapping from inputs to hidden layer activation values.
For this reason, the results shown by \citet{shwartz-ziv_opening_2017} do not constitute estimates of actual information-theoretic quantities and may at best be reinterpreted as measures for geometric clustering \citep{goldfeld_estimating_2019, geiger_information_2021}.

Building on this realization, \citet{goldfeld_estimating_2019} showed that a meaningful information-theoretic analysis can be performed by disrupting the injectivity and limiting the channel capacity of the network forward function. While they achieve this by explicitly adding Gaussian noise to each activation value, we achieve the same goal by training and analyzing quantized very-low precision activation values in this work. This approach is in line with recent trends towards low \citep{gupta_deep_2015} and very-low precision \citep{hubara_quantized_2016} computing for reasons of efficiency and scalability. The crucial difference to \citet{shwartz-ziv_opening_2017}'s binning approach lies in the fact that the quantization we employ is intrinsic to the network itself, making mutual information quantities well-defined and meaningful as well as ensuring the data processing inequality holds for the Markov chain of successive hidden layers.

The picture of artificial neural networks as mere information channels needs to be refined following the insight that network layers need not only to pass on all relevant information but also to transform it in such a way that it becomes accessible for subsequent processing. To understand this representation of information within each layer, one needs to go beyond analyzing hidden layers as a whole and look at the structure of information representation across the neurons within those layers instead, which can be done in a principled way by employing partial information decomposition.

Previous works on PID in artificial neural networks have used PID to motivate and interpret classical mutual information quantities \citep{wibral2017quantifying} and used it to analyze filters in convolutional neural networks \citep{yu_understanding_2020}. Furthermore, \citet{tax_partial_2017} used PID to analyze pairs of neurons in generative neural networks and find that the networks move towards more unique representations of the target in later stages of training. We expand upon this approach by being the first to analyze all neurons of a layer as individual PID sources, which allows also to uncover higher-order interactions between them.

Closely connected to our approach, \citet{reing_discovering_2021} derived scalable measures for quantifying higher-order interactions between neurons using Shannon information. By focusing more on scalability, however, the authors trade in some of the interpretability and expressivity of an approach based on PID. Furthermore, \citet{reing_discovering_2021} cover mostly the analysis of representations without reference to a target variable, while we focus on analyzing only task-relevant information contributions. An empirical comparison to this approach can be found in \autoref{sec:reing}.

In the context of representation learning, information-theoretic approaches focus mostly on quantifying the degree of entanglement of latent dimensions in Variational Autoencoders, often with respect to the information about some underlying generative factor. This is done by measuring the total correlation \citep{kim_disentangling_2018} as a summary statistic or the difference between two classical mutual information quantities \citep{chen_isolating_2018,tokui_disentanglement_2021}. These works are related to ours, but analyze only a specific layer, do not use the label information as target and stay within the realm of classical Shannon information framework.

Other works on representations in deep neural networks have attempted to define notions of \textit{usable} information based on the idea of restricting the ability of an observer to perfectly decode the presented information \citep{xu_theory_2020, kleinman_usable_2021}. A related approach is the probing of representations using simple linear readouts, for example \citet{alain_understanding_2018}, who find empirical evidence hinting at less complex representations in deeper layers. Additionally, several other summary statistics of representations have been suggested based on dimensionality analysis \citep{ansuini_intrinsic_2019, fukunaga1971algorithm} or canonical correlation analysis \citep{morcos_insights_2018, kornblith_similarity_2019}. We compare an analysis based on local intrinsic dimension to our measure in \autoref{sec:dimension}.

Several other complexity measures were compared by~\citet{jiang_fantastic_2019}, who analyzed them in terms of their ability to predict the networks' capacities for out-of-sample generalization. This idea is founded on the intuitive notion that less complex representations should be more robust to minor changes in the inputs. As our paper is focused on interpretability of the novel measure of representational complexity, potential ties to generalization ability have not been studied as of yet.

\section{Deriving an interpretable measure for the complexity of a representation from partial information decomposition}

\subsection{Background: Partial information decomposition}
\label{sec:pid}

The mutual information $I(T:\bm S) = I(T:S_1, \dots, S_n)$ that several source random variables $\bm{S} = \left\{S_1, \dots, S_n\right\}$ carry about a target $T$ can be distributed amongst these sources in very different ways. While some pieces of information are \textit{unique} to certain sources, others are encoded \textit{redundantly} by different sources, while yet others are only accessible \textit{synergistically} from several sources considered jointly. With three or more sources, even more complex contributions, in general describing redundancies between synergies, emerge. For example, some information about $T$ might be accessible either from source $S_3$ alone, or -- redundantly to this -- from a synergistic combination of sources $S_1$ and $S_2$, but from nowhere else; this information would constitute one of the information atoms $\Pi$.  

As mentioned before, these information atoms can be combined to form all classical mutual information quantities $I(T:\bm S_{\bm{a}})$ between $T$ and subsets of sources $\bm{S}_{\bm{a}} = \{S_i|i \in \bm{a}\}$ with indices $\bm a \subseteq \{1, \dots, n\}$. Conversely, the information atoms can be uniquely identified by which classical mutual information quantities they contribute to. Mathematically, this notion is captured by their corresponding \textit{parthood distribution} $\Phi : \mathcal{P}(\{1, \dots, n\}) \to \left\{0, 1\right\}$ -- a binary function defined on the powerset $\mathcal{P}$ of source indices that is equal to ``1'' for exactly those sets $\bm a$ of source indices for which the atom $\Pi(T:\bm{S}_\Phi)$ is part of the mutual information $I(T:\bm S_{\bm a})$ \citep[Section 2]{gutknecht_bits_2021} (for a detailed explanation, see~\autoref{apx:parthood_distributions_pid_atoms}). Thus, the mutual information of any set of sources $\bm S_{\bm a}$ and $T$ can be written as 

\begin{equation}
    I(T:\bm S_{\bm a}) = \sum_{\{\Phi|\Phi(\bm a) = 1\}}\Pi(T:\bm{S}_\Phi).
    \label{eq:mi_decomposition_phi}
\end{equation}

Instead of labelling the atoms by their parthood distribution $\Phi$, the atoms can equivalently be referenced as $\Pi(T:\bm{S}_\alpha)$ using certain sets of sets of source indices $\alpha \in \mathcal{P}(\mathcal{P}(\{1, \dots, n\}))$, which can be mapped one-to-one to parthood distributions \citep{gutknecht_bits_2021} and are referred to as \textit{antichains} by \cite{williams_nonnegative_2010}, who first introduced PID using these antichains (for a detailed explanation, see~\autoref{apx:parthood_antichain_equivalence}). The antichains make apparent the connection between atoms and their meaning as redundancies between synergies: For example, the atom from before capturing the information that can be obtained either from $S_3$ alone or synergistically from $S_1$ and $S_2$ together is referred to by the antichain $\{\{1, 2\},\{3\}\}$.

The number of atoms scales super-exponentially with the number of sources $n$, increasing from 7579 for $n=5$ sources to over 7.8 million for $n=6$ \citep{williams_nonnegative_2010, wiedemann1991computation}. Note, however, that this increase in the number of atoms should not be mistaken for a shortcoming of PID but rather as an acknowledgment of the vast number of configurations information can be encoded in among multiple variables. 

On the other hand, there are only $2^n-1$ mutual information quantities that provide constraints through \autoref{eq:mi_decomposition_phi}. One way to resolve this underdeterminedness is through the introduction of a measure for redundancy \mbox{$I_\cap(T: \bm S_\alpha) = I_\cap(T: \{\bm S_{\bm a_1}, \dots, \bm S_{\bm a_k}\})$} between collections of sources indexed by $\bm a_i \subseteq \{1, \dots, n\}$. Noting that mutual information can be interpreted as a ``self-redundancy'' such that $I(T:\bm{S}_{\bm{a}}) = I_\cap(T:\{\bm{S}_{\bm{a}}\}$), these redundancies can be constructed from the information atoms in an analogous and consistent way to~\autoref{eq:mi_decomposition} as

\begin{equation}
    I_\cap(T: \bm S_\alpha) = \sum_{\beta \preceq \alpha} \Pi(T:\bm S_\beta),
    \label{eq:red_decomposition}
\end{equation}
where $\preceq$ refers to the partial order of antichains on the redundancy lattice (for a detailed explanation, see~\autoref{apx:lattice}) \citep{crampton2000two, crampton2001completion, williams_nonnegative_2010}. Since now the number of defining equations is equal to the number of atoms, these can be computed by inverting \autoref{eq:red_decomposition}, which is known as a \textit{Moebius-Inversion} \citep{rota1964foundations, williams_nonnegative_2010}.

Over recent years, a number of different redundancy measures have been suggested which fulfill a multitude of different additional desiderata \citep{lizier2018information}, e.g., from decision theory \citep{bertschinger_quantifying_2014}, game theory \citep{ince_measuring_2017} or Kelly gambling \citep{finn_pointwise_2018}. In this paper, we utilize the $I^{\mathrm{sx}}_\cap$ measure introduced by \citet{makkeh_introducing_2021}, where `sx' stands for shared exclusions of probability mass. In essence, the measure defines redundancy by the regions of probability space which are jointly excluded by observing the realizations of multiple collections of random variables. As this redundancy measure draws only on notions from probability theory it is the most canonical choice for our purpose of analyzing neural networks. The concept of representational complexity that we introduce here, however, can readily be generalized to any other redundancy-based multivariate PID.

\subsection{Representational complexity}

In order to gain insight into the representation of task-relevant information in multiple equivalent source variables, we introduce in the following a principled way to evaluate the question: On average, how much of a system needs to be observed simultaneously to access a particular piece of information? We propose that this difficulty of retrieving label information may be quantified by considering the minimum number of sources that are needed jointly in order to reveal a piece of information. As a first step towards this goal, one needs to dissect the total mutual information into pieces which individually have a clear notion of how many neurons are needed to retrieve the information.

Note, however, that such a dissection cannot be achieved with mutual information due to its inability to disentangle synergistic and redundant contributions: As an example, take two source variables $S_1$ and $S_2$ which carry information about a target $T$. Since the mutual information term $I(T:S_i)$ captures all information that the single source $S_i$ carries about $T$, one might think that the total information that can be obtained from single sources is given by the sum $I(T:S_1) + I(T:S_2)$. However, apart from their individual unique contributions, $S_1$ and $S_2$ might carry some identical pieces of information about $T$, and this redundancy is double-counted in the above sum. Moreover, the two sources might carry some information about $T$ in a way that it is only accessible from both sources taken together, e.g., if one source represents a cipher text with the second providing the decryption key. The joint mutual information $I(T:S_1, S_2)$ thus consists of four atoms: The two unique atoms of the respective sources, their redundancy and their synergy, while the individual mutual information terms $I(T:S_i)$ consist of the unique information of the respective source and the redundancy between both sources. Thus, if we subtract the sum of the individual mutual information terms from the joint term we obtain

\begin{equation*}
    I(T:S_1, S_2) - I(T:S_1) - I(T:S_2) = \mathrm{synergy} - \mathrm{redundancy},
\end{equation*}
with no way to quantify synergy or redundancy individually. Since the redundancy can be obtained from any single variable while the synergy necessarily requires access to both variables jointly, it is impossible to dissect the joint mutual information into pieces with a well-defined minimal number of neurons needed to retrieve the information within the framework of classical Shannon information alone.

The PID atoms, on the other hand, do have a well-defined minimal number of neurons necessary to retrieve the information, which is captured by what we term the ``Degree of Synergy'' $m$ given by
\begin{equation}
\label{eq:m}
    m(\alpha) := \min_{\bm a \in \alpha} |\bm a|,
\end{equation}
which is illustrated for all trivariate PID atoms in \autoref{fig:degreeofsynergy}.
For instance, the PID atom with the antichain $\{\{1\}\{4\}\{2, 3\}\}$ has a degree of synergy of $m=1$ since the information can be obtained by observing the single sources $S_1$ or $S_4$, while the atom with the antichain $\{\{1, 2\}\{3, 4, 5\}\}$ has a degree of synergy of $m=2$ since the smallest set of sources the information can be retrieved from is the pair consisting of $S_1$ and $S_2$.

Finally, the average degree of synergy, weighted by the relative information contributions of the respective atoms, defines the ``Representational Complexity'' $C$ as
\begin{equation}
    C := \frac{1}{I(T:\bm S)} \sum_\alpha \Pi(T:\bm{S}_\alpha) \, m(\alpha)
    \label{eq:c}.
\end{equation}
A representational complexity of $C=1$ means that all information can be obtained from single sources, while $C$ is equal to the number of sources if the information is spread purely synergistically between all of them. Bare in mind that representational complexity is defined as the \emph{average} over PID atoms and does thus not reflect the ``worst case'': Even if $C$ is smaller than its theoretical maximum one might need more than $C$ sources to identify a specific target label.

An alternate way to view the average in \autoref{eq:c} is to first sum up all PID atoms of a fixed degree of synergy $m$ and then taking their weighted average. These sums of PID atoms are conceptually related to the \textit{Backbone Atoms} introduced by \cite{rosas_operational_2020}. Rosas measure, however, does not constitute a PID in the strict sense, as mutual information terms between the target and subsets of sources cannot be constructed from the atoms. This makes the backbone atoms less suitable for quantifying how many neurons are needed to access a piece of information.

The concept of representational complexity is linked to the idea of sparse coding in neuroscience \citep{olshausen_sparse_2004}, in that both approaches aim to capture the relevancy of higher-order relations between neurons. While sparsity measures quantify the spread of activity tied to individual realizations across the neuron population, e.g.,~by measuring the average momentary number of active neurons, representational complexity measures the spread of \textit{information}, i.e., how distributed the ability to \textit{distinguish between} realizations of the target variable (e.g.~the label in classification tasks) is. Note further that representational complexity is computed on the mutual information with the network's target variable, thus relating to only task-relevant components of the activation patterns, while sparsity measures are generally not selective about the task-relevancy of the activations.

From its definition as an expectation value of the degree of synergy weighted by the fractional information contribution of the respective PID atoms, it is evident that $C$ is well-defined whenever $I(T:S)\neq 0$, otherwise $C$ is not only mathematically undefined by also conceptually vacuous. Another aspect to pay attention to is that when both the target and the sources are continuous variables, $C$ can be undefined when $I(T: S)$ is infinite. Thus, we can determine that $C$ is well-defined for the quantized activation values with finite alphabets considered in this paper.

From the information-theoretic structure of $C$, it is easy to see that certain desirable properties of mutual information and PID are inherited by $C$. 
\begin{itemize}
    \item {\bf Symmetry} Representational complexity is invariant under the reordering of the source variables. 
    \item {\bf Invariance under isomorphisms} Representational complexity is invariant under isomorphic mapping of the target or the sources, for example, invertible relabelling of classification labels. 
\end{itemize}

There are other desirable properties that are related to the choice of the PID measure. The first is the continuity as a function of the joint probability distribution $p(T,S)$. This is an important prerequisite to a robust estimation of the PID atoms, as small deviations in the empirical probabilities don't incur drastic changes in the representational complexity. Another less important property is differentiability as a function of the joint probability distribution $p(T,S)$. The $I^{\mathrm{sx}}$ measure utilized in this paper has both properties~\citep{makkeh_introducing_2021}. 

Finally there are two intuitive bounds for the representational complexity. The representational complexity is lower than or equal to the maximum number of neurons and higher than or equal to one.  The first bound is attained if one needs the whole layer to extract any the task-relevant information, while in the latter case, all information can be extracted from single neurons.

\paragraph{Example applications of representational complexity}

\begin{figure}
    \centering
    \includegraphics[width=.95\textwidth]{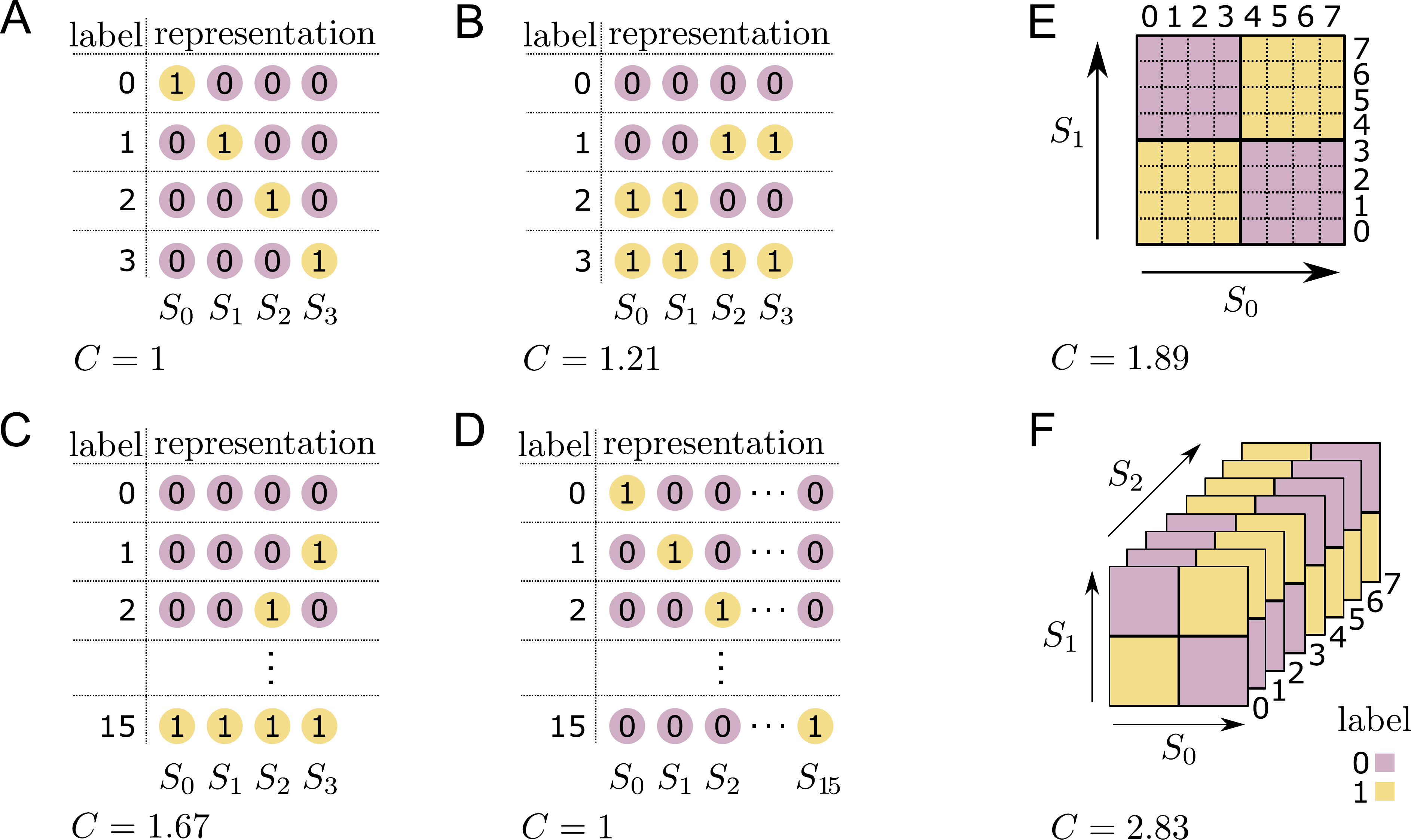}
    \caption{\textbf{Demonstration of representational complexity with simple examples.} Subfigures \textbf{A-D} show different label encoding schemes in binary neurons $S_1, \dots, S_n$ (circles). Subfigure \textbf{E} shows an encoding of two labels (purple and yellow) in two eight-level neurons, while \textbf{F} shows the three-dimensional extension. The corresponding representational complexities $C$ are denoted in the bottom left corner.}
    \label{fig:examples}
    \vspace{-0.1cm}
\end{figure}

To become familiar with the intuitive interpretation of representational complexity, we demonstrate its use with simple information encodings in small toy examples. Consider a categorical random variable with a finite number of distinct classes, which are labelled by integers and occur with the same probability. How does representational complexity differ between different representations of this information in multiple binary neurons?

First, imagine four distinct values are represented sparsely across four neurons (\autoref{fig:examples}.A). As for all such \textit{one-hot} encodings, the representational complexity of this encoding is equal to one~(for a formal proof, refer to \autoref{apx:onehot}). This is intuitively clear: For each realization, you only need access to the one neuron that is equal to $1$ to fully determine the correct label.

Using the same number of neurons, we can also encode the same information in a more complex way. For instance, take pairs of neurons that redundantly represent digits of a binary representation of the label index (\autoref{fig:examples}.B). In this case, there is no longer a single neuron that contains the full information about the target, which is reflected by an increase of the representational complexity to $C=1.21$.

Using all of the coding capacity of the four neurons from the example before by encoding $16$ distinct states in a binary code (\autoref{fig:examples}.C), the representational complexity increases further: $C$ reaches a value of $1.67$ as more information has to be encoded synergistically between the four neurons. The reason why $C$ is not equal to $4$, despite the fact that the correct label can only be isolated with access to all four neurons, is that \textit{parts} of the information, e.g.,~the parity of the label number, can be extracted from fewer than all sources, reflecting the average nature of $C$. Conversely, if we now again expand the $16$ realizations to $16$ neurons in a one-hot encoding (\autoref{fig:examples}.D), we revert back to a representational complexity of just $C=1$. From this, we gain the intuitive realization that the closer one gets to the channel capacity, the more one is forced to encode some of the information in more complex, higher-synergy terms.

However, even small amounts of information can be encoded in a highly synergistic manner. Consider the more realistic case of two discrete artificial neurons with eight activation levels each, with the target value being the exclusive disjunction (XOR) of the thresholded neurons' activations (\autoref{fig:examples}.E). Despite encoding just a single bit of information, the representational complexity assumes a value of $C=1.89$. This value is close to the theoretical maximum value of $C=n=2$, with the small difference being due to the nature of the $I^{\mathrm{sx}}_\cap$ redundancy measure. Extending this task to the parity of three thresholded neurons (\autoref{fig:examples}.F), the representational complexity attains a value of $C=2.83$, similarly close to its maximum of $C=n=3$.

\section{Application to deep neural networks}

To exemplify the utility of our measure, we show how it can be applied to analyze the hidden layer representations of deep neural network classifiers solving the well-established MNIST \citep{lecun_gradient-based_1998} handwritten digit recognition  and CIFAR10 \citep{krizhevsky2009learning} image classification tasks. For the former, we employ a small fully-connected feed-forward network of which we analyze the last four layers before the output layer while for the latter we employ a more sophisticated convolutional network architecture and analyze the final small fully-connected layers.

The MNIST network consists of seven fully-connected layers, starting with a vector of the $784$ grayscale pixel values of the image input, tapering down to only five neurons in layers $L_3$, $L_4$ and $L_5$ and culminating in a ten neuron \textit{one-hot} output vector, in which each neuron represents one of the ten possible digits (\autoref{fig:network_repr_compl}.A). The structure has been chosen such that the three successive five-neuron layers can be analyzed in full, as it is practically infeasible at present to compute the full PID for more than five sources because of the fast-growing number of PID atoms.

In order to limit the channel capacity to be able to observe non-trivial information dynamics \citep{goldfeld_estimating_2019}, all layers are discretized to eight or four quantization levels ($3$ or $2$ bits) per neuron during both training and analysis (\autoref{apx:implementation}). The networks are trained using stochastic gradient descent for $10^5$ training epochs and reach, with eight quantization levels, an average accuracy of $\SI{99.9 \pm 0.1}{\percent}$ on the train and $\SI{95.1 \pm 0.4}{\percent}$ on the test set for $20$ runs with unique random weight initializations. Details about quantization schemes and forward-stochastic backprop algorithm used in training can be found in \autoref{apx:implementation}.

The larger CIFAR10 network comprises three convolutional layers with max pooling and ReLU activation functions. The outputs of the last convolutional layer are flattened and fed into a fully-connected feed-forward section of the network employing tanh activation functions (\autoref{apx:implementation}). Due to the conceptual difficulties with quantizing functions with semi-infinite ranges we applied the quantization with 8 levels only to the last layers with tanh activation function, which are also the ones analyzed later. For 10 random weight initializations, the CIFAR10 network achieves an average accuracy of $\SI{99.90 \pm 0.04}{\percent}$ on the train and $\SI{68.9 \pm 0.7}{\percent}$ on the test set after $10^4$ epochs of stochastic gradient descent training.

\label{sec:why_trainset}
Except for when explicitly specified otherwise, all analyses have been done on the train dataset. The rationale for this is twofold: Firstly, since the train set samples drive the learning process, an analysis on the train dataset samples allows more insights into what information-theoretic representation the backpropagation algorithm implicitly pursues. Secondly, the train dataset - being defined as the set of samples the network is trained on - is trivially finite in size and can be analyzed in full. The test dataset, on the other hand, may be interpreted as merely a finite-sized sample (e.g.~60.000 handwritten digits for MNIST) of a potentially infinite generative process (e.g.~the process of handwriting and formatting digits), which means that potential biases of the sampling and estimation need to be kept in mind. A comparison between results computed on the train and test datasets can be found in \autoref{sec:testset}.

\subsection{Representational complexity in small layers}

\begin{figure}
    \centering
    \includegraphics[width=\linewidth]{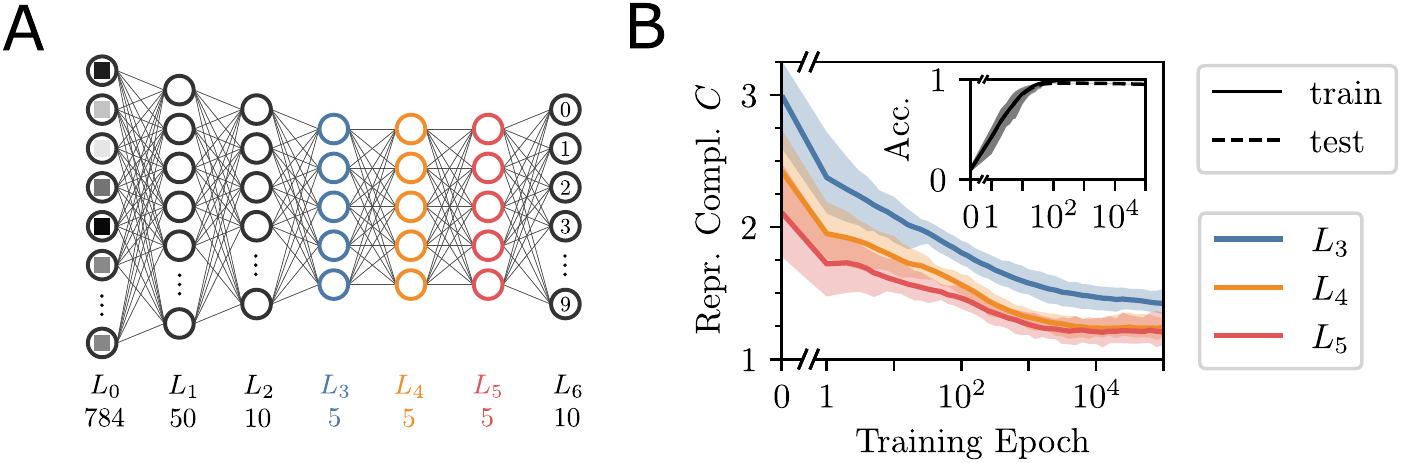}
    \caption{\textbf{Representational complexity of hidden layers with respect to the label decreases over training and throughout successive hidden layers.} \textbf{A} The MNIST classifier network consists of seven fully-connected layers with tanh activation functions for all but the last layer, which is equipped a softmax activation function. \textbf{B} The representational complexity of the three small hidden layers $L_3$, $L_4$ and $L_5$ are computed from their PID atoms on the train data set. The solid lines show the average of $20$ randomly initialized runs, the shaded areas contain $95\%$ of the data points. The inset shows train and test set accuracy.}
    \label{fig:network_repr_compl}
\end{figure}

For layers with up to five variables, the PID can be computed in full. For the MNIST network with eight quantization levels, analyzing the mutual information of the hidden layers $L_2$, $L_3$ and $L_4$ with the ground-truth label allows to track the representational complexity both over training and through the successive layers (\autoref{fig:network_repr_compl}.B). We find that both with increasing training epoch and layer index, the representational complexity $C$ decreases.

\begin{figure}
    \centering
    \includegraphics[scale=1.05]{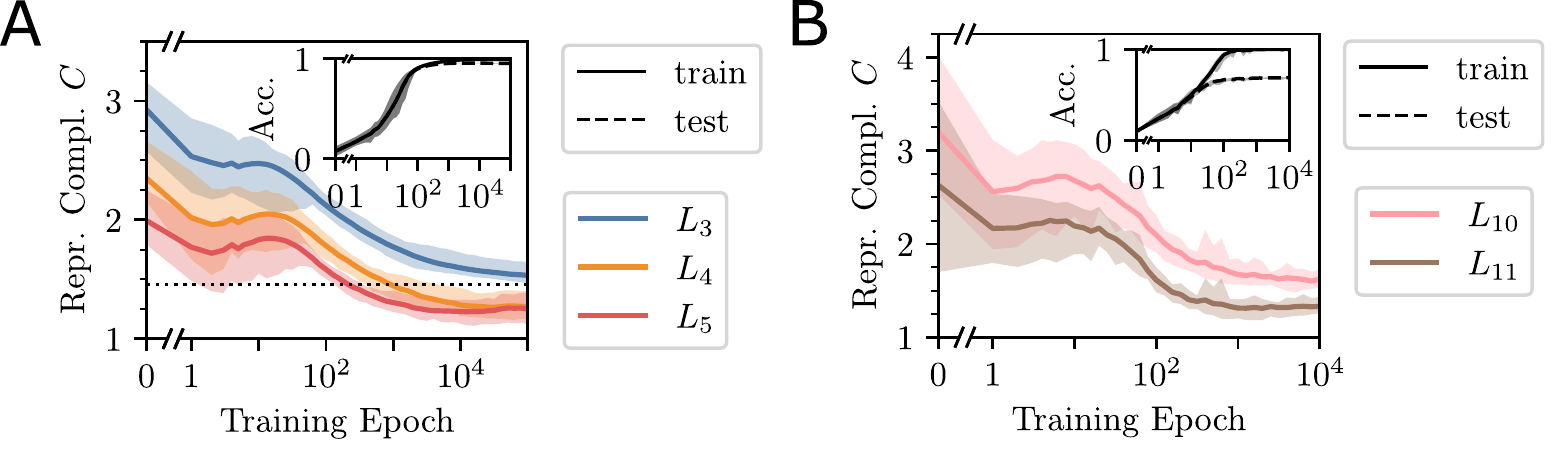}
    \caption{\textbf{The observed decrease in representational complexity appears to be a robust trend reproducible on different output encodings and on larger convolutional networks solving CIFAR10.} \textbf{A} The MNIST network with a binary output encoding shows a brief increase in representational complexity after about 10 epochs of training before decreasing. The representational complexity of the binary output layer for $10$ equiprobable classes has been numerically determined to be $C=1.46$ and is indicated by a dotted line. \textbf{B} The representational complexity of the small fully-connected layers of the CIFAR10 network shows a similar small increase in representational complexity in the beginning before converging to a low value close to the theoretical minimum of $C=1$.}
    \label{fig:repr_compl_binary_and_cifar}
\end{figure}

This decrease in representational complexity appears to be a robust trend observable in networks independent of the encoding enforced on the output layer, chosen task and network architecture. While a neural network seems to approach the output complexity in the case of a one-hot output representation, for which $C_\text{one-hot} = 1$ (proven in \autoref{apx:onehot}), the representational complexity of the hidden layers decreases below that of the targeted output representation in the case of a binary label encoding (\autoref{fig:repr_compl_binary_and_cifar}.A). In the binary encoding case, however, an intermediate increase in representational complexity is observable, which might be attributable to an early restructuring of the representation.

Likewise, for the five-neuron layers $L_{10}$ and $L_{11}$ of the CIFAR10 network, one observes a plateau of the representational complexity until about epoch 10 which is followed by a subsequent decrease (\autoref{fig:repr_compl_binary_and_cifar}.B). Again, $C$ is lower for the layer later in the network.

\subsection{Representational complexity in larger layers}

\begin{figure}
    \centering
    \includegraphics[width=\linewidth]{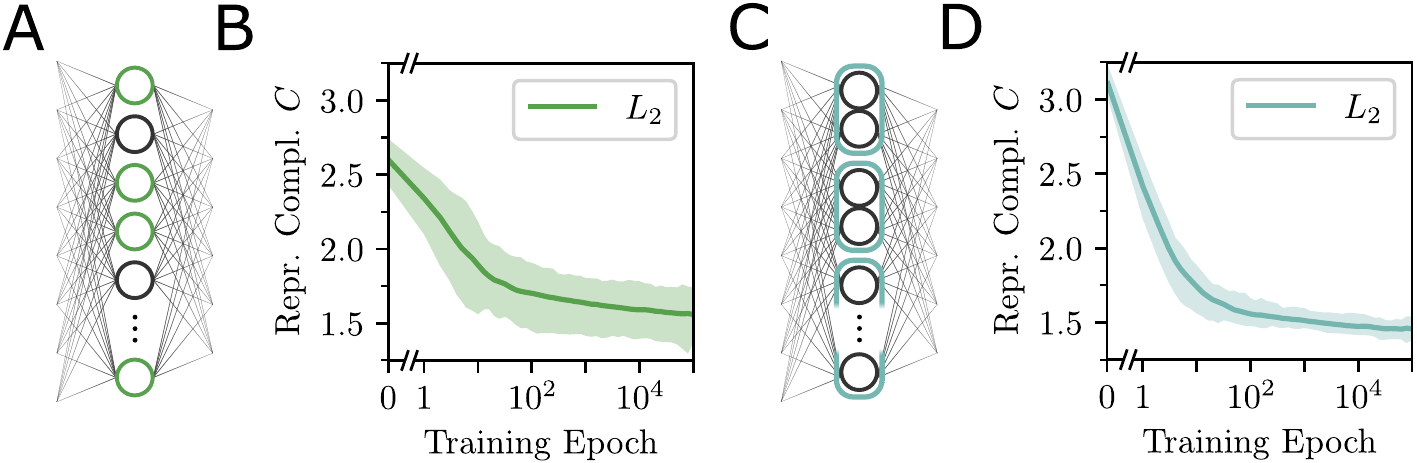}
    \caption{\textbf{Representational complexity of subsampled and coarse-grained neurons decreases over training in the MNIST network with four quantization levels.}
    \textbf{A} Five neurons are randomly selected from layer $L_2$ of the MNIST network to form the random variables for the subsampling analysis. \textbf{B} For $26$ random choices of source variables each on $20$ training runs, the subsampled representational complexity shows a decrease over the training phase, but with high variance.
    \textbf{C} Five coarse-grained random variables are constructed from pairs of neurons of the second hidden layer. \textbf{D} The coarse-grained representational complexity averaged over $20$ training runs is seen to decrease with increasing training epoch. }
    \label{fig:coarsegraining_subsampling}
\end{figure}

Typical production neural networks have layers which are much wider than five neurons (e.g.,~\citealp{he_deep_2015, krizhevsky_imagenet_2017}). Since the compute required for a full PID for six or more source variables is prohibitive because of the rapidly increasing number of atoms, in order to be able to apply the tool of PID in general and representational complexity in particular to wider layers, one needs to devise procedures to reduce the number of random variables to analyze. In this section, we present two complementary approaches to make representational complexity applicable to moderately wider layers, namely subsampling and coarse-graining, and show the latter to be the more theoretically sound approach.

\paragraph{Subsampling}

A straightforward approach for reducing the number of random variables, which has been employed in previous works~(e.g., \citealp{tax_partial_2017}), is to \textit{subsample} only $\hat{n}$ neurons from a layer to use as PID sources.
By randomly selecting five neurons from the second layer of the MNIST network with four quantization levels, we again observe a decrease of the representational complexity of the hidden layer with respect to the label over the training phase (\autoref{fig:coarsegraining_subsampling}.A, B), albeit with a larger amount of variability.

However, this approach suffers from fundamental conceptual flaws. By choosing only $\hat{n}$ source variables $\bm{S}_{\bm{a}} = (S_{a_1}, \dots, S_{a_{\hat{n}}})$ and decomposing their mutual information $I(T:\bm{S}_{\bm{a}})$ with the target variable $T$, only atoms with a degree of synergy of less than or equal to $\hat{n}$ can be quantified, meaning that any higher interaction than $\hat n$ will be overlooked, resulting in a potential underestimation of the true representational complexity of all $n$ sources. At the same time, pieces of information that appear to only be obtainable synergistically within one subset $\bm{S}_{\bm{a}}$ of sources may very well be redundant with a single source $S_i \notin \bm{S}_{\bm{a}}$, thus leading to potential overestimation of representational complexity. For these reasons, no bound on the true representational complexity can be established from subsampling and we find subsampling to be an unsuitable approach for overcoming the scaling difficulties of PID.

\paragraph{Coarse-graining}

A complementary approach to reduce the number of variables is to combine multiple neurons, forming fewer higher-dimensional random variables; this procedure will be referred to as \textit{coarse-graining}. This approach appears more theoretically sound because one can show that the actual representational complexity can be bounded up to a factor by the coarse graining representational complexity. In \autoref{apx:coarse-graining}, we prove that if $d$ neurons $S_i$ each are combined to random variables $\tilde{S}_j$, the true representational complexity of the layer is bounded from below by the coarse-grained representational complexity computed from $\bm{\tilde{S}} = \{\tilde{S}_1, \dots, \tilde{S}_{n/d}\}$, while being simultaneously bounded from above by $d$ times this value, i.e.,
\begin{equation}
    C(T:\bm{\tilde{S}}) \leq C(T:\bm S) \leq d \, C(T:\bm{\tilde{S}}).
\end{equation}

In our example network with four quantization levels, the representational complexity computed from randomly assigned neuron pairs in the second hidden layer consisting of ten neurons, exhibits a decreasing pattern over training that is highly similar to that of the representational complexity computed with individual neurons as sources in layers $L_3$ to $L_5$ (\autoref{fig:coarsegraining_subsampling}.C, D).

\subsection{Comparison to Neural Information Decomposition by Reing et al.}
\label{sec:reing}

\begin{figure}
    \centering
    \includegraphics[scale=1.05]{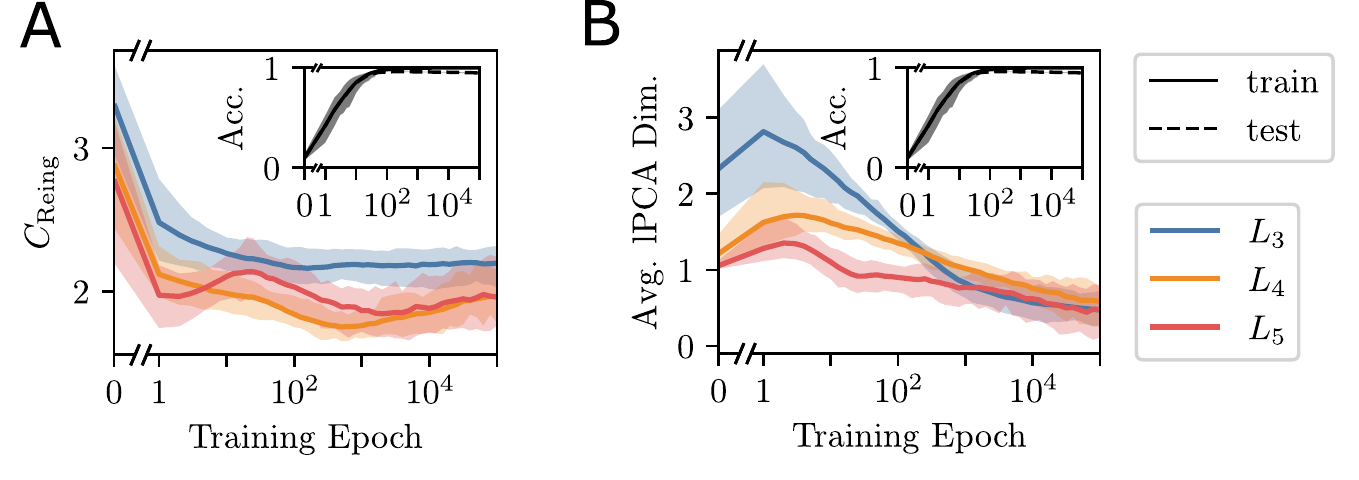}
    \caption{\textbf{Comparisons with alternative approaches shows decreasing trend but no ordering through successive hidden layers. (A)} The \emph{Reing Representational Complexity} $C_\text{Reing}$ (\autoref{eq:reing_repr_compl}) decreases only in the first epochs of training and seems to fluctuate at a relatively high level. \textbf{(B)} The \emph{local intrinsic dimension} of the hidden layer activations of the MNIST classifier network based on the Fukunaga-Olsen algorithm \citep{fukunaga1971algorithm} shows similar trends as representational complexity but increases at the beginning of training. For the intrinsic dimension estimation, the number of neighbors was set to 100, and the alphaFO parameter was set to 0.05, both being fairly common settings.
    \textbf{(A) and (B)} Both quantities have been computed for the same three hidden layers of the network as the results shown in \autoref{fig:network_repr_compl}.
    }
    \label{fig:combined_comparison}
\end{figure}

In their 2021 paper, \citeauthor{reing_discovering_2021} introduced an alternate way to decompose the mutual information between a target variable $T$ and multiple source variables $\bm{S}$ into what they term \emph{Directed Local Differences} $C_T(k-1||k)$, such that
\begin{equation}
    I(T:\bm{S}) = \sum_{k=1}^n C_T(k-1||k),
\end{equation}
where $n$ is the number of sources. These terms, defined as
\begin{equation}
    C_T(k-1||k) = \frac{1}{\binom{n}{k}} \sum_{|\bm{a}|=k} H(T|\bm{S}_{\bm{a}}) - \frac{1}{\binom{n}{k-1}} \sum_{|\bm{a}|=k-1} H(T|\bm{S}_{\bm{a}})
\end{equation}
are proposed by the authors as ``a measure of information at order $k$ in the source variables $\bm{S}$ about a target variable $T$'' \citep{reing_discovering_2021} and are conceptually close analogs to the sum of PID atoms of a fixed degree of synergy. Nevertheless, due to their definition depending on the conditional entropy of the target (in particular fractional sums of $H(T\mid \bm{S}_{\bm a})$), these quantities cannot be constructed from the PID atoms. This means that a theoretical comparison is out of reach for this paper, whereas an empirical comparison between the two approaches might still be enlightening. Comparing the sums of PID atoms of a fixed degree of synergy to the directed differences for the MNIST network reveals that the sums of PID atoms attribute more of the information to only single neurons by the end of training, while the directed differences imply that higher-order interactions stay relevant to a higher degree (\autoref{fig:reing_deg_syn_comparison}).

In close analogy to \autoref{eq:c}, one can utilize the directed differences $C_T(k-1||k)$ to introduce a measure of \emph{Reing Representational Complexity} as
\begin{equation}
    C_{\text{Reing}} = \frac{1}{I(T:\bm{S})} \sum_{k=1}^n C_Y(k-1||k) \; k
    \label{eq:reing_repr_compl}
\end{equation}
that captures the average order $k$ of information between the target $T$ and multiple sources $\bm{S}$.

While we empirically observe this Reing representational complexity to decrease in the earlier epochs of training, it does not decrease monotonically, stays higher than the PID based representational complexity and does not decrease throughout successive hidden layers (\autoref{fig:combined_comparison}A). While there is no a priori reason to expect a convergence to a particular value of complexity, we find the general trend of decreasing complexity over training and the ordering of $C$ for successive hidden layers -- the latter of which is absent for $C_\mathrm{Reing}$ -- to be intuitive findings which might reflect the networks tendency to encode the relevant information in more and more simple terms.

These results show that representational complexity leads to different values while also arguably being the more well-founded measure compared to the directed differences, because it leverages the more expressive PID framework compared to Shannon mutual information. On the other hand, the approach using directed differences has the advantage of being more easily applicable a larger number of source variables.

\subsection{Comparison to Local Intrinsic Dimension}
\label{sec:dimension}

Additionally to the comparison with the approach of \cite{reing_discovering_2021} we want to provide an independent baseline comparison with an established non-information theoretic measure. For this empirical comparison, we chose the local intrinsic dimensionality of the manifold of discretized layer activation patterns as described by \cite{fukunaga1971algorithm}. This method estimates the number of dimensions based on the largest singular values of the covariance matrix of the data, and is implemented in the scikit-dimension package \citep{bac2021scikit}. The reason we chose this method out of a multitude of potential candidates is that it tries to estimate the minimum number of parameters that are necessary to locally describe the data, which is similar to our approach. Taking the average over the dimensions estimated for each sample, however, is quite different from taking the average over PID atoms. Additionally, the intrinsic dimension is less dependent on the orientation of the manifold with respect to the individual neuron dimensions. However, the most important conceptual difference to our approach is that the intrinsic dimensionality is not immediately related to the label since it takes into account only the internal representations of the layers.

For the MNIST dataset, we see empirically that both the local intrinsic dimension and the representational complexity decrease with training, but the intrinsic dimension shows no consistent order throughout successive hidden layers (\autoref{fig:combined_comparison}B). The two measures seem to be related, however the lack of a consistent decrease over layers also points at a potential benefit of our approach: being target-oriented towards the label allows for a focus on the more relevant aspects of the representation in terms of solving the task. 

\subsection{Representational Complexity for Train and Test datasets}
\label{sec:testset}

\begin{figure}
    \centering
    \includegraphics[scale=1.05]{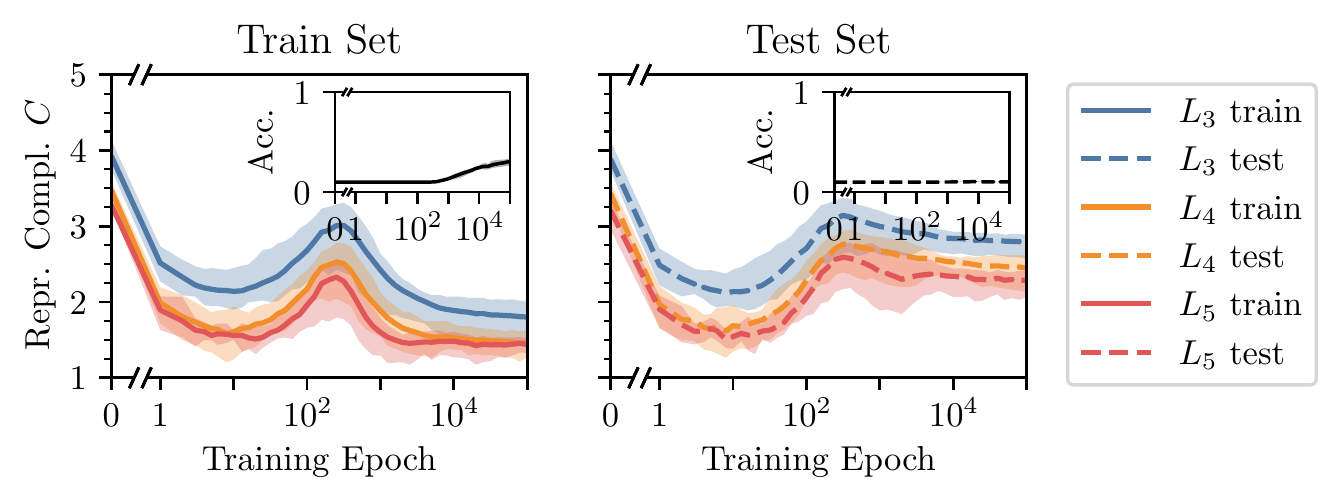}
    \caption{\textbf{When trained on an MNIST dataset with shuffled labels, representational complexity shows a different behaviour for train and test datasets.} Both figures show the representational complexity for the three hidden layers L3, L4 and L5, using the architecture illustrated in \autoref{fig:network_repr_compl}A, but trained on a dataset with a random new assignment of labels to MNIST dataset images. Coinciding with the rise in training accuracy, the representational complexity for both train and test datasets reaches a pronounced peak around epoch 200. The representational complexity on the train dataset decreases towards the end of training. However, for the test dataset (right, also with shuffled labels) it remains substantially higher after this peak. For improved visibility of this phenomenon we combined the averages from both figures into a single diagram in \autoref{fig:shuffled_train_test_comparison} in \autoref{apx:additional_results}.}
    \label{fig:repr_compl_overfitting}
\end{figure}

To further illustrate the behaviour and possible usecases of representational complexity, we next compare the representational complexity computed on the train dataset with results computed on the test dataset. The train dataset is the default choice for computing the representational complexity, since it defines the task that the networks actually optimize their representations for (see also \autoref{sec:why_trainset} on page \pageref{sec:why_trainset}).  In \autoref{fig:train_test_combined}, we compare the results on the train and test datasets for averages over multiple runs for the MNIST dataset, and we provide additionally the results for individual training runs in \autoref{fig:test_train_individual_runs}. Generally, we find a consistent and robust trend of higher representational complexity for the test dataset compared to that for the train dataset towards later stages of training.

In the extreme case of forcing the network to predict randomly assigned, fixed labels (by memorizing them via overfitting, see \autoref{fig:repr_compl_overfitting}) this increased representational complexity on the test dataset starts to appear with rising accuracy of the network and the gap between representational complexity on the train and test dataset is particularly wide.

While conclusive statements cannot be drawn from these empirical findings alone, using representational complexity for investigating the phenomenon of overfitting from a novel perspective appears as an enticing topic for further research.

\section{Discussion}

In this work, we introduced representational complexity as a measure of sparsity of task-relevant information in neural networks. We first explained how a principled and meaningful application of information theory in neural networks is possible by using quantized activation values in both training and evaluation. Subsequently, we derived representational complexity from partial information decomposition and show it to be the theoretically sound answer to the question of how many neurons need to be observed simultaneously in order to access an average piece of information. For larger layers, we presented subsampling and coarse-graining procedures. We discussed issues with the former approach, while deriving bounds from results of the latter approach.

In small quantized deep neural networks, we find representational complexity to decrease both over training and through successive layers. We empirically find the reduction to be a robust result which can not only be observed when analyzing small layers neuron-wise, but also when using subsampling or coarse-graining on larger layers. Furthermore, we provide first evidence that the final values the representational complexity converges to do not depend on the representational complexity of the chosen output representation. We hypothesize that the dynamics of representational complexity in early stages of training might indicate an early restructuring of representations for the binary output encoding and the more complex CIFAR10 task.

A comparison to the related information-theoretic tools introduced by \cite{reing_discovering_2021} shows that by focusing more on computational tractability on larger networks, \citeauthor{reing_discovering_2021}'s measures can only partly reproduce our finding of a decreasing representational complexity throughout successive hidden layers. Similarly, an analysis based on local intrinsic dimensions \citep{fukunaga1971algorithm} shows that while this dimension does indeed decrease over training, it does in practice not necessarily decrease with successive hidden layers, which is intuitively what we would expect for the complexity of the relevant part of the representation. This points to an important feature of our approach that many other comparable measures lack, namely the possibility to specify a target random variable. Thereby one is able to study a specific aspect of the representation, for example the degree of synergy of the task-relevant mutual information with the label. Discovering the precise differences between these related approaches on a theoretical ground, however, remains a promising avenue for future research.

Interestingly, towards the end of training we observe a consistent difference between the representational complexity computed on the train and the test dataset, pointing at a difference in the corresponding representation (see \autoref{sec:testset}). In the extreme case of a network overfitting on randomly shuffled labels for the MNIST dataset (\autoref{fig:shuffled_train_test_comparison}) we report a substantial gap between the representational complexity on the train and test datasets. We believe that subsequent research studying this behaviour explicitly could result in new insights into the phenomenon of overfitting.

Our results also indicate some problems with previous approaches that apply PID by subsampling only pairs of sources  (e.g. \cite{tax_partial_2017}). Since the representational complexity is high in the early stages of training, high-order interactions between neurons appear to play a major role. These, however, go unnoticed in pairwise approaches. Representational complexity thus not only allows for assessing whether and when it is justified to only look at lower-order interactions, but can also be used to quantify how much of the higher-order interactions are disregarded.

\paragraph{Limitations}
\label{sec:limitations}
Scaling our approach to larger networks remains a challenge. To mitigate this problem, we introduced subsampling and course-graining procedures, which allow the application of representational complexity to moderately larger networks. However, while subsampling suffers from conceptual flaws, the proven attainable bounds on coarse graining become impractically loose for large networks. For typical production networks we suggest two approaches: Firstly, results found on small toy networks may be generalizable to larger networks, e.g.,~by inductive proofs, and secondly, estimation of representational complexity could become feasible by finding a way to compute it without computing all PID atoms beforehand. The latter can be achieved by employing a PID based on synergy instead of redundancy, for example by extending on ideas of \citet{rosas_operational_2020}.

Furthermore, our analysis methods are currently restricted to intrinsically discrete systems, as $I^{\mathrm{sx}}_\cap$ was originally defined for discrete variables only \citep{makkeh_introducing_2021}. However, in the meantime, a continuous generalization of $I^{\mathrm{sx}}_\cap$ has been proven to exist and to be measure-theoretically well-defined \citep{schick-poland_partial_2021}. Once an efficient estimator for this generalized measure is available, this will make it possible to analyze also continuous systems in which the total mutual information is inherently restricted to a finite value, e.g.,~by some form of noise in the system.

\paragraph{Conclusion and outlook}
We have here shown how to use the representational complexity to gain insight into the structure of the internal representations as an average across all source and target realizations. Due to the local nature of $I^{\mathrm{sx}}_\cap$, this analysis can also be further broken down into the representational complexity for individual target realizations, i.e., class labels. Such an analysis may reveal for example that some classes are linked to representations of high complexity while others are not, or that some classes are represented with low complexity earlier than others during training, while yet others fail to reach a low complexity. Complementary, 
we have shown that a more detailed analysis could also be achieved by, instead of averaging over the different degrees of synergy, splitting up the total mutual information into ``backbone atoms'' \citep{rosas_operational_2020} which combine all atoms of the same degree of synergy $m$ and investigating how they evolve over training individually. In addition, once a proper measure for Synergy based-PID is proposed, one could use that fact that these Backbone atoms, as demonstrated by \cite{rosas_operational_2020}, can be computed individually which will allow for computing the representational complexity for significantly larger layers.  

Some theoretical properties of representational complexity have yet to be uncovered. For instance, a lower bound to $C$ might be derived from the notion that closer to channel capacity, one is forced to represent some information in higher synergy terms.

Given the PID atoms, a whole suite of other interesting and easily interpretable quantities can be devised. One enticing candidate is the average multiplicity of information, defined as
    \mbox{$M = (1/I(T:\bm S)) \sum_\alpha \Pi(\alpha) \, |\alpha|$.}
The quantity reflects the average number of times a piece of information is represented redundantly and thus appears to be a promising candidate to serve as a complementary summary statistic to representational complexity, which relates to the synergy of the information encoding.

In future research, one may also choose the PID source and target variables differently. While in this work we focused on analyzing how the information about the classification label is encoded in the network, a similar analysis for the information of the hidden layers about the whole input samples might reveal differences in the representation of task-relevant and task-irrelevant information. 

More generally, we promote representational complexity as a principled novel tool for general complex systems in which a group of equivalent variables jointly holds information about a target variable. New possible applications include both other artificial network architectures such as recurrent networks, but also biological  or ecological systems. Being derived from first principles in information theory and partial information decomposition, representational complexity provides a clear and intuitive interpretation and the suggested subsampling and coarse-graining procedures make it applicable to a wide variety of questions.

\section*{Code availability}
To analyze artificial neural networks using information-theoretic tools, we developed the \emph{nninfo} python package containing scripts to reproduce the main results of this paper. The code is available on GitHub under the URL \url{https://github.com/Priesemann-Group/nninfo}.

\section*{Acknowledgements}
We would like to thank Alexander Ecker for fruitful discussions and feedback on this paper.  We would also like to thank Valentin Neuhaus, Lucas Rudelt, Marcel Graetz and the rest of the Priesemann group for their valuable comments and feedback. MW and AM are employed at the Göttingen Campus Institute for Dynamics of Biological Networks (CIDBN) funded by the Volkswagen Stiftung. DE and MW were supported by a funding from the Ministry for Science and Education of Lower Saxony and the Volkswagen Foundation through the “Niedersächsisches Vorab” under the program ``Big Data in den Lebenswissenschaften'' -- project ``Deep learning techniques for association studies of transcriptome and systems dynamics in tissue morphogenesis''. AS and VP received funding from the Deutsche Forschungsgemeinschaft (DFG, German Research Foundation) under Germany’s Excellence Strategy - EXC 2067/1 - 390729940.
\clearpage
\bibliography{bibliography}

\begin{thebibliography}{55}
\providecommand{\natexlab}[1]{#1}
\providecommand{\url}[1]{\texttt{#1}}
\expandafter\ifx\csname urlstyle\endcsname\relax
  \providecommand{\doi}[1]{doi: #1}\else
  \providecommand{\doi}{doi: \begingroup \urlstyle{rm}\Url}\fi

\bibitem[Alain \& Bengio(2017)Alain and Bengio]{alain_understanding_2018}
Guillaume Alain and Yoshua Bengio.
\newblock Understanding intermediate layers using linear classifier probes.
\newblock \emph{International Conference on Learning Representations}, 2017.

\bibitem[Alom et~al.(2019)Alom, Taha, Yakopcic, Westberg, Sidike, Nasrin,
  Hasan, Van~Essen, Awwal, and Asari]{alom_state---art_2019}
Md~Zahangir Alom, Tarek~M Taha, Chris Yakopcic, Stefan Westberg, Paheding
  Sidike, Mst~Shamima Nasrin, Mahmudul Hasan, Brian~C Van~Essen, Abdul~AS
  Awwal, and Vijayan~K Asari.
\newblock A state-of-the-art survey on deep learning theory and architectures.
\newblock \emph{Electronics}, 8\penalty0 (3):\penalty0 292, 2019.

\bibitem[Amodei et~al.(2016)Amodei, Olah, Steinhardt, Christiano, Schulman, and
  Man{\'e}]{amodei_concrete_2016}
Dario Amodei, Chris Olah, Jacob Steinhardt, Paul Christiano, John Schulman, and
  Dan Man{\'e}.
\newblock Concrete problems in ai safety.
\newblock \emph{arXiv preprint arXiv:1606.06565}, 2016.

\bibitem[Ansuini et~al.(2019)Ansuini, Laio, Macke, and
  Zoccolan]{ansuini_intrinsic_2019}
Alessio Ansuini, Alessandro Laio, Jakob~H Macke, and Davide Zoccolan.
\newblock Intrinsic dimension of data representations in deep neural networks.
\newblock \emph{Advances in Neural Information Processing Systems}, 32, 2019.

\bibitem[Bac et~al.(2021)Bac, Mirkes, Gorban, Tyukin, and
  Zinovyev]{bac2021scikit}
Jonathan Bac, Evgeny~M Mirkes, Alexander~N Gorban, Ivan Tyukin, and Andrei
  Zinovyev.
\newblock Scikit-dimension: a python package for intrinsic dimension
  estimation.
\newblock \emph{Entropy}, 23\penalty0 (10):\penalty0 1368, 2021.

\bibitem[Beckurts \& Dedekind(1897)Beckurts and
  Dedekind]{beckurts1897zerlegungen}
Heinr Beckurts and R~Dedekind.
\newblock \emph{{\"U}ber Zerlegungen von Zahlen durch ihre gr{\"o}ssten
  gemeinsamen Theiler}.
\newblock Springer, 1897.

\bibitem[Bengio et~al.(2013)Bengio, Courville, and
  Vincent]{bengio_representation_2013}
Yoshua Bengio, Aaron Courville, and Pascal Vincent.
\newblock Representation learning: A review and new perspectives.
\newblock \emph{IEEE transactions on pattern analysis and machine
  intelligence}, 35\penalty0 (8):\penalty0 1798--1828, 2013.

\bibitem[Bertschinger et~al.(2014)Bertschinger, Rauh, Olbrich, Jost, and
  Ay]{bertschinger_quantifying_2014}
Nils Bertschinger, Johannes Rauh, Eckehard Olbrich, J{\"u}rgen Jost, and Nihat
  Ay.
\newblock Quantifying unique information.
\newblock \emph{Entropy}, 16\penalty0 (4):\penalty0 2161--2183, 2014.

\bibitem[Chen et~al.(2018)Chen, Li, Grosse, and Duvenaud]{chen_isolating_2018}
Ricky~TQ Chen, Xuechen Li, Roger~B Grosse, and David~K Duvenaud.
\newblock Isolating sources of disentanglement in variational autoencoders.
\newblock \emph{Advances in neural information processing systems}, 31, 2018.

\bibitem[Cover \& Thomas(2006)Cover and Thomas]{cover2006elements}
Thomas~M Cover and Joy~A Thomas.
\newblock Elements of information theory 2nd edition (wiley series in
  telecommunications and signal processing).
\newblock \emph{Acessado em}, 2006.

\bibitem[Crampton \& Loizou(2000)Crampton and Loizou]{crampton2000two}
Jason Crampton and George Loizou.
\newblock Two partial orders on the set of antichains, 2000.

\bibitem[Crampton \& Loizou(2001)Crampton and Loizou]{crampton2001completion}
Jason Crampton and George Loizou.
\newblock The completion of a poset in a lattice of antichains.
\newblock \emph{International Mathematical Journal}, 1\penalty0 (3):\penalty0
  223--238, 2001.

\bibitem[Eykholt et~al.(2018)Eykholt, Evtimov, Fernandes, Li, Rahmati, Xiao,
  Prakash, Kohno, and Song]{eykholt_robust_2015}
Kevin Eykholt, Ivan Evtimov, Earlence Fernandes, Bo~Li, Amir Rahmati, Chaowei
  Xiao, Atul Prakash, Tadayoshi Kohno, and Dawn Song.
\newblock Robust physical-world attacks on deep learning visual classification.
\newblock In \emph{Proceedings of the IEEE conference on computer vision and
  pattern recognition}, pp.\  1625--1634, 2018.

\bibitem[Finn \& Lizier(2018)Finn and Lizier]{finn_pointwise_2018}
Conor Finn and Joseph~T Lizier.
\newblock Pointwise partial information decomposition using the specificity and
  ambiguity lattices.
\newblock \emph{Entropy}, 20\penalty0 (4):\penalty0 297, 2018.

\bibitem[Fukunaga \& Olsen(1971)Fukunaga and Olsen]{fukunaga1971algorithm}
Keinosuke Fukunaga and David~R Olsen.
\newblock An algorithm for finding intrinsic dimensionality of data.
\newblock \emph{IEEE Transactions on Computers}, 100\penalty0 (2):\penalty0
  176--183, 1971.

\bibitem[Geiger(2021)]{geiger_information_2021}
Bernhard~C Geiger.
\newblock On information plane analyses of neural network classifiers--a
  review.
\newblock \emph{IEEE Transactions on Neural Networks and Learning Systems},
  2021.

\bibitem[Goldfeld et~al.(2019)Goldfeld, Berg, Greenewald, Melnyk, Nguyen,
  Kingsbury, and Polyanskiy]{goldfeld_estimating_2019}
Ziv Goldfeld, Ewout van~den Berg, Kristjan Greenewald, Igor Melnyk, Nam Nguyen,
  Brian Kingsbury, and Yury Polyanskiy.
\newblock Estimating information flow in deep neural networks.
\newblock \emph{International Conference on Machine Learning}, 2019.

\bibitem[Goodfellow et~al.(2016)Goodfellow, Bengio, and
  Courville]{goodfellow2016deep}
Ian Goodfellow, Yoshua Bengio, and Aaron Courville.
\newblock \emph{Deep learning}.
\newblock MIT press, 2016.

\bibitem[Gupta et~al.(2015)Gupta, Agrawal, Gopalakrishnan, and
  Narayanan]{gupta_deep_2015}
Suyog Gupta, Ankur Agrawal, Kailash Gopalakrishnan, and Pritish Narayanan.
\newblock Deep learning with limited numerical precision.
\newblock In \emph{International conference on machine learning}, pp.\
  1737--1746. PMLR, 2015.

\bibitem[Gutknecht et~al.(2021)Gutknecht, Wibral, and
  Makkeh]{gutknecht_bits_2021}
Aaron~J Gutknecht, Michael Wibral, and Abdullah Makkeh.
\newblock Bits and pieces: Understanding information decomposition from
  part-whole relationships and formal logic.
\newblock \emph{Proceedings of the Royal Society A}, 477\penalty0
  (2251):\penalty0 20210110, 2021.

\bibitem[He et~al.(2016)He, Zhang, Ren, and Sun]{he_deep_2015}
Kaiming He, Xiangyu Zhang, Shaoqing Ren, and Jian Sun.
\newblock Deep residual learning for image recognition.
\newblock In \emph{Proceedings of the IEEE conference on computer vision and
  pattern recognition}, pp.\  770--778, 2016.

\bibitem[Hubara et~al.(2017)Hubara, Courbariaux, Soudry, El-Yaniv, and
  Bengio]{hubara_quantized_2016}
Itay Hubara, Matthieu Courbariaux, Daniel Soudry, Ran El-Yaniv, and Yoshua
  Bengio.
\newblock Quantized neural networks: Training neural networks with low
  precision weights and activations.
\newblock \emph{The Journal of Machine Learning Research}, 18\penalty0
  (1):\penalty0 6869--6898, 2017.

\bibitem[Ince(2017)]{ince_measuring_2017}
Robin~AA Ince.
\newblock Measuring multivariate redundant information with pointwise common
  change in surprisal.
\newblock \emph{Entropy}, 19\penalty0 (7):\penalty0 318, 2017.

\bibitem[Jiang et~al.(2019)Jiang, Neyshabur, Mobahi, Krishnan, and
  Bengio]{jiang_fantastic_2019}
Yiding Jiang, Behnam Neyshabur, Hossein Mobahi, Dilip Krishnan, and Samy
  Bengio.
\newblock Fantastic generalization measures and where to find them.
\newblock In \emph{International Conference on Learning Representations}, 2019.

\bibitem[Kim \& Mnih(2018)Kim and Mnih]{kim_disentangling_2018}
Hyunjik Kim and Andriy Mnih.
\newblock Disentangling by factorising.
\newblock In \emph{International Conference on Machine Learning}, pp.\
  2649--2658. PMLR, 2018.

\bibitem[Kleinman et~al.(2020)Kleinman, Achille, Idnani, and
  Kao]{kleinman_usable_2021}
Michael Kleinman, Alessandro Achille, Daksh Idnani, and Jonathan Kao.
\newblock Usable information and evolution of optimal representations during
  training.
\newblock In \emph{International Conference on Learning Representations}, 2020.

\bibitem[Kornblith et~al.(2019)Kornblith, Norouzi, Lee, and
  Hinton]{kornblith_similarity_2019}
Simon Kornblith, Mohammad Norouzi, Honglak Lee, and Geoffrey Hinton.
\newblock Similarity of neural network representations revisited.
\newblock In \emph{International Conference on Machine Learning}, pp.\
  3519--3529. PMLR, 2019.

\bibitem[Krizhevsky(2009)]{krizhevsky2009learning}
Alex Krizhevsky.
\newblock Learning multiple layers of features from tiny images, 2009.

\bibitem[Krizhevsky et~al.(2012)Krizhevsky, Sutskever, and
  Hinton]{krizhevsky_imagenet_2017}
Alex Krizhevsky, Ilya Sutskever, and Geoffrey~E Hinton.
\newblock Imagenet classification with deep convolutional neural networks.
\newblock \emph{Advances in neural information processing systems}, 25, 2012.

\bibitem[LeCun et~al.(1998)LeCun, Bottou, Bengio, and
  Haffner]{lecun_gradient-based_1998}
Yann LeCun, L{\'e}on Bottou, Yoshua Bengio, and Patrick Haffner.
\newblock Gradient-based learning applied to document recognition.
\newblock \emph{Proceedings of the IEEE}, 86\penalty0 (11):\penalty0
  2278--2324, 1998.

\bibitem[LeCun et~al.(2015)LeCun, Bengio, and Hinton]{lecun_deep_2015}
Yann LeCun, Yoshua Bengio, and Geoffrey Hinton.
\newblock Deep learning.
\newblock \emph{nature}, 521\penalty0 (7553):\penalty0 436--444, 2015.

\bibitem[Li et~al.(2015)Li, Yosinski, Clune, Lipson, and
  Hopcroft]{li_convergent_2015}
Yixuan Li, Jason Yosinski, Jeff Clune, Hod Lipson, and John Hopcroft.
\newblock Convergent learning: Do different neural networks learn the same
  representations?
\newblock In \emph{Feature Extraction: Modern Questions and Challenges}, pp.\
  196--212. PMLR, 2015.

\bibitem[Lizier et~al.(2018)Lizier, Bertschinger, Jost, and
  Wibral]{lizier2018information}
Joseph~T Lizier, Nils Bertschinger, J{\"u}rgen Jost, and Michael Wibral.
\newblock Information decomposition of target effects from multi-source
  interactions: Perspectives on previous, current and future work.
\newblock \emph{Entropy}, 20\penalty0 (4):\penalty0 307, 2018.

\bibitem[Makkeh et~al.(2021)Makkeh, Gutknecht, and
  Wibral]{makkeh_introducing_2021}
Abdullah Makkeh, Aaron~J Gutknecht, and Michael Wibral.
\newblock Introducing a differentiable measure of pointwise shared information.
\newblock \emph{Physical Review E}, 103\penalty0 (3):\penalty0 032149, 2021.

\bibitem[Morcos et~al.(2018)Morcos, Raghu, and Bengio]{morcos_insights_2018}
Ari Morcos, Maithra Raghu, and Samy Bengio.
\newblock Insights on representational similarity in neural networks with
  canonical correlation.
\newblock \emph{Advances in Neural Information Processing Systems}, 31, 2018.

\bibitem[Olshausen \& Field(2004)Olshausen and Field]{olshausen_sparse_2004}
Bruno~A Olshausen and David~J Field.
\newblock Sparse coding of sensory inputs.
\newblock \emph{Current opinion in neurobiology}, 14\penalty0 (4):\penalty0
  481--487, 2004.

\bibitem[Reing et~al.(2021)Reing, Ver~Steeg, and
  Galstyan]{reing_discovering_2021}
Kyle Reing, Greg Ver~Steeg, and Aram Galstyan.
\newblock Discovering higher-order interactions through neural information
  decomposition.
\newblock \emph{Entropy}, 23\penalty0 (1):\penalty0 79, 2021.

\bibitem[R{\'e}nyi(1961)]{renyi1961measures}
Alfr{\'e}d R{\'e}nyi.
\newblock On measures of entropy and information.
\newblock In \emph{Proceedings of the Fourth Berkeley Symposium on Mathematical
  Statistics and Probability, Volume 1: Contributions to the Theory of
  Statistics}, volume~4, pp.\  547--562. University of California Press, 1961.

\bibitem[Rosas et~al.(2020)Rosas, Mediano, Rassouli, and
  Barrett]{rosas_operational_2020}
Fernando~E Rosas, Pedro~AM Mediano, Borzoo Rassouli, and Adam~B Barrett.
\newblock An operational information decomposition via synergistic disclosure.
\newblock \emph{Journal of Physics A: Mathematical and Theoretical},
  53\penalty0 (48):\penalty0 485001, 2020.

\bibitem[Rota(1964)]{rota1964foundations}
Gian-Carlo Rota.
\newblock On the foundations of combinatorial theory i. theory of m{\"o}bius
  functions.
\newblock \emph{Zeitschrift f{\"u}r Wahrscheinlichkeitstheorie und verwandte
  Gebiete}, 2\penalty0 (4):\penalty0 340--368, 1964.

\bibitem[Samek et~al.(2017)Samek, Wiegand, and
  M{\"u}ller]{samek_explainable_2017}
Wojciech Samek, Thomas Wiegand, and Klaus-Robert M{\"u}ller.
\newblock Explainable artificial intelligence: Understanding, visualizing and
  interpreting deep learning models.
\newblock \emph{arXiv preprint arXiv:1708.08296}, 2017.

\bibitem[Saxe et~al.(2019)Saxe, Bansal, Dapello, Advani, Kolchinsky, Tracey,
  and Cox]{saxe_information_2019}
Andrew~M Saxe, Yamini Bansal, Joel Dapello, Madhu Advani, Artemy Kolchinsky,
  Brendan~D Tracey, and David~D Cox.
\newblock On the information bottleneck theory of deep learning.
\newblock \emph{Journal of Statistical Mechanics: Theory and Experiment},
  2019\penalty0 (12):\penalty0 124020, 2019.

\bibitem[Schick-Poland et~al.(2021)Schick-Poland, Makkeh, Gutknecht, Wollstadt,
  Sturm, and Wibral]{schick-poland_partial_2021}
Kyle Schick-Poland, Abdullah Makkeh, Aaron~J Gutknecht, Patricia Wollstadt,
  Anja Sturm, and Michael Wibral.
\newblock A partial information decomposition for discrete and continuous
  variables.
\newblock \emph{arXiv preprint arXiv:2106.12393}, 2021.

\bibitem[Shannon(1948)]{shannon_mathematical_1948}
Claude~Elwood Shannon.
\newblock A mathematical theory of communication.
\newblock \emph{The Bell system technical journal}, 27\penalty0 (3):\penalty0
  379--423, 1948.

\bibitem[Shwartz-Ziv \& Tishby(2017)Shwartz-Ziv and
  Tishby]{shwartz-ziv_opening_2017}
Ravid Shwartz-Ziv and Naftali Tishby.
\newblock Opening the black box of deep neural networks via information.
\newblock \emph{arXiv preprint arXiv:1703.00810}, 2017.

\bibitem[Tax et~al.(2017)Tax, Mediano, and Shanahan]{tax_partial_2017}
Tycho Tax, Pedro~AM Mediano, and Murray Shanahan.
\newblock The partial information decomposition of generative neural network
  models.
\newblock \emph{Entropy}, 19\penalty0 (9):\penalty0 474, 2017.

\bibitem[Tishby \& Zaslavsky(2015)Tishby and Zaslavsky]{tishby_deep_2015}
Naftali Tishby and Noga Zaslavsky.
\newblock Deep learning and the information bottleneck principle.
\newblock In \emph{2015 IEEE Information Theory Workshop (ITW)}, pp.\  1--5.
  IEEE, 2015.

\bibitem[Tokui \& Sato(2021)Tokui and Sato]{tokui_disentanglement_2021}
Seiya Tokui and Issei Sato.
\newblock Disentanglement analysis with partial information decomposition.
\newblock \emph{arXiv preprint arXiv:2108.13753}, 2021.

\bibitem[Tsallis(1988)]{tsallis1988possible}
Constantino Tsallis.
\newblock Possible generalization of boltzmann-gibbs statistics.
\newblock \emph{Journal of statistical physics}, 52:\penalty0 479--487, 1988.

\bibitem[Wibral et~al.(2017)Wibral, Finn, Wollstadt, Lizier, and
  Priesemann]{wibral2017quantifying}
Michael Wibral, Conor Finn, Patricia Wollstadt, Joseph~T Lizier, and Viola
  Priesemann.
\newblock Quantifying information modification in developing neural networks
  via partial information decomposition.
\newblock \emph{Entropy}, 19\penalty0 (9):\penalty0 494, 2017.

\bibitem[Wiedemann(1991)]{wiedemann1991computation}
Doug Wiedemann.
\newblock A computation of the eighth dedekind number.
\newblock \emph{Order}, 8\penalty0 (1):\penalty0 5--6, 1991.

\bibitem[Williams \& Beer(2010)Williams and Beer]{williams_nonnegative_2010}
Paul~L Williams and Randall~D Beer.
\newblock Nonnegative decomposition of multivariate information.
\newblock \emph{arXiv preprint arXiv:1004.2515}, 2010.

\bibitem[Xu et~al.(2019)Xu, Zhao, Song, Stewart, and Ermon]{xu_theory_2020}
Yilun Xu, Shengjia Zhao, Jiaming Song, Russell Stewart, and Stefano Ermon.
\newblock A theory of usable information under computational constraints.
\newblock In \emph{International Conference on Learning Representations}, 2019.

\bibitem[Yadav \& Bottou(2019)Yadav and Bottou]{yadav_cold_2019}
Chhavi Yadav and L{\'e}on Bottou.
\newblock Cold case: The lost mnist digits.
\newblock \emph{Advances in neural information processing systems}, 32, 2019.

\bibitem[Yu et~al.(2020)Yu, Wickstr{\o}m, Jenssen, and
  Principe]{yu_understanding_2020}
Shujian Yu, Kristoffer Wickstr{\o}m, Robert Jenssen, and Jose~C Principe.
\newblock Understanding convolutional neural networks with information theory:
  An initial exploration.
\newblock \emph{IEEE transactions on neural networks and learning systems},
  32\penalty0 (1):\penalty0 435--442, 2020.

\end{thebibliography}
\bibliographystyle{tmlr}

\clearpage

\appendix
\section{Appendix}

This appendix starts by providing more background on the PID problem in \autoref{apx:parthood_distributions_pid_atoms} by identifying the PID atoms by their parthood distributions as derived by \cite{gutknecht_bits_2021} from first principles of mereology. Then we explain the equivalence of the parthood distribution and the antichains that was established by \cite{gutknecht_bits_2021}. Finally we describe in \autoref{apx:lattice} the lattice structure that allows computing the amount of information in the PID atoms upon the introduction of a redundancy measure as shown by the seminal work of \cite{williams_nonnegative_2010}.

In the subsequent two sections of the appendix we provide our novel proofs of the bounds on course graining (in \autoref{apx:coarse-graining}) and the proof of the representational complexity of the one-hot representation being one for any number of labels (in \autoref{apx:onehot}).

Finally, we include a description of the implementation and the quantization schemes in \autoref{apx:implementation} and show additional results in \autoref{apx:additional_results}.

\subsection{Additional background}

\subsubsection{Parthood distributions characterize the PID atoms}
\label{apx:parthood_distributions_pid_atoms}
\cite{gutknecht_bits_2021} recently showed that the PID structure can be built -- independently of the redundancy concept and a specific measure -- from first principles of mereology, i.e., \emph{part-whole relationships}. This is because PID seeks to find the parts of information that constitute the whole information $I(T: S_1,\ldots, S_n)$ and are characterized by which collection of sources they are part of. In what follows we give a brief introduction to the PID structure using the mereology approach as has been developed by \cite[Section 2]{gutknecht_bits_2021}.

Through the lens of mereology, a PID atom is that part of the total information about the target that is provided by specific collections of source variables but not others. Thus, each atom is associated with a certain set of collections of sources $\bm S_{\mathbf{a}_1},\ldots, \bm S_{\mathbf{a}_m}\in\mathcal P(\bm S)$. The corresponding PID atom is then defined as the information contribution that can be obtained if and only if any of the collections is accessed.

For instance, in the case of two sources, consider all collections of the source variables, i.e., the powerset $\mathcal P(\bm S)=\{\emptyset, \{S_1\}, \{S_2\}, \{S_1, S_2\}\}.$ Now by looking at what collections of sources the PID atoms can be a part of, \cite{gutknecht_bits_2021} postulates two straightforward principles: (i) no information atom can be part of the empty set as it represents no source variable and (ii) all the information atoms are part of the full collection ($\{S_1, S_2\}$). This leaves four possibilities to combine the collections: Atoms can be either be a part of (a) both $\{S_1\}$ and $\{S_2\}$ or (b) only $\{S_1\}$ or (c) only $\{S_2\}$ or (d) only $\{S_1, S_2\}$ (Table~\ref{tab:parthood-bi}).
\begin{table}[h!]
    \centering
    \begin{tabular}{|l|c|c|c|c|}
    \hline
    Atom                            &$\emptyset$    &$\{S_1\}$          & $\{S_2\}$         & $\{S_1,S_2\}$\\
    \hline
    a) $\Pi(\{\{S_1\},\{S_2\}\})$   &{\bf 0}        &{\color{red}1}     &{\color{red}1}     &{\bf 1}\\
    b) $\Pi(\{\{S_1\}\})$           &{\bf 0}        &{\color{red}1}     &{\color{red}0}     &{\bf 1}\\
    c) $\Pi(\{\{S_2\}\})$           &{\bf 0}        &{\color{red}0}     &{\color{red}1}     &{\bf 1}\\
    d) $\Pi(\{\{S_1,S_2\}\})$       &{\bf 0}        &{\color{red}0}     &{\color{red}0}     &{\bf 1}\\
    \hline
    \end{tabular}
\caption{\textbf{Parthood distributions for two sources.} The value ``0'' stands for not being a part of the collection's information about $T$, as opposed to ``1''. ``{\bf0}'' and ``{\bf1}'' represent principle (i) and (ii), respectively. ``{\color{red}0}'' and ``{\color{red}1}'' represent the four possible choices of part-whole relationships to $\{S_1\}$ and $\{S_2\}$ realizing the four PID atoms. Each atom is annotated with its combination of collections of sources excluding supersets, e.g., the atom corresponding to possibility (a) is annotated with the collections $\{\{S_1\},\{S_2\}\}$ as shorthand of $\{\{S_1\},\{S_2\}, \{S_1,S_2\}\}$ and so on.}\label{tab:parthood-bi}
\end{table}

The first atom in the table is $\Pi(\{\{S_1\},\{S_2\}\})$ that is part of $S_1$ and $S_2$. This atom represents what is redundant to $S_1$ and $S_2$ as it is part of both variables. The second and third atoms are exclusively parts of either $S_1$ and $S_2$, thereby they are unique to either variable. These two atoms represent what is unique to $S_1$ and $S_2$ respectively. Finally, the atom $\Pi(\{\{S_1, S_2\}\})$ is only part of the set $\{S_1,S_2\}$ but none of the sources individually. This constitutes an information contribution that cannot be realised by any of the two variables unless they are considered jointly, capturing the synergistic information.

This PID, indeed, serves its aim to partition $I(T: (S_j)_{j\in J})$ for any number of sources $|J|$ into the realized atoms via the part-whole relationships. For example, in the case of two sources, we arrive at the following partition~\citep{gutknecht_bits_2021, williams_nonnegative_2010}:
\begin{equation*}
    \begin{split}
        I(T: S_1, S_2) &= \Pi(\{\{S_1\},\{S_2\}\}) + \Pi(\{\{S_1\}\}) + \Pi(\{\{S_2\}\}) + \Pi(\{\{S_1,S_2\}\})\; ,\\
        I(T: S_1) &= \Pi(\{\{S_1\},\{S_2\}\}) + \Pi(\{\{S_1\}\})\; ,\\
        I(T: S_2) &= \Pi(\{\{S_1\},\{S_2\}\}) + \Pi(\{\{S_2\}\}).
    \end{split}
\end{equation*}

Each atom is represented with a string of zeros and ones referred to as a \emph{parthood distribution} (Table~\ref{tab:parthood-bi}). \cite{gutknecht_bits_2021} then uses these parthood distributions to build the structure of PID for any number of sources. We have already seen that these parthood distributions follow two principles: (i) none of the PID atoms are part of the $\emptyset$ (ii) any PID atom is part of the joint set of source variables. However, an additional principle arises from the nature of the information contributions and is vital when considering more than two sources: The principle states that whenever an atom is part of the information provided by a collection $\bm S_\mathbf{a},$ then it must be part of the information provided by \emph{any collection containing $\bm S_\mathbf{a}$}. 

Hence, each PID atom is associated with a parthood distribution that is formally a boolean function $\Phi: \mathcal{P}(\{1, \dots, n\}) \to \{0, 1\}$ satisfying three principles: (i) \emph{no information is in the empty set}, i.e. $\Phi(\emptyset) = 0$, (ii) \emph{all information is in the full set}, i.e. $\Phi(\{1, \ldots, n\})=1,$ and (iii) any \emph{information that is provided by collection $\bm a$ is provided by any of its super-collections}, which is equivalent to saying that $\Phi$ is monotone. An example of such association is present in Figure~\ref{fig:parthood-dist}.

\begin{figure}[h!]
    \centering
    \includegraphics[width=13cm]{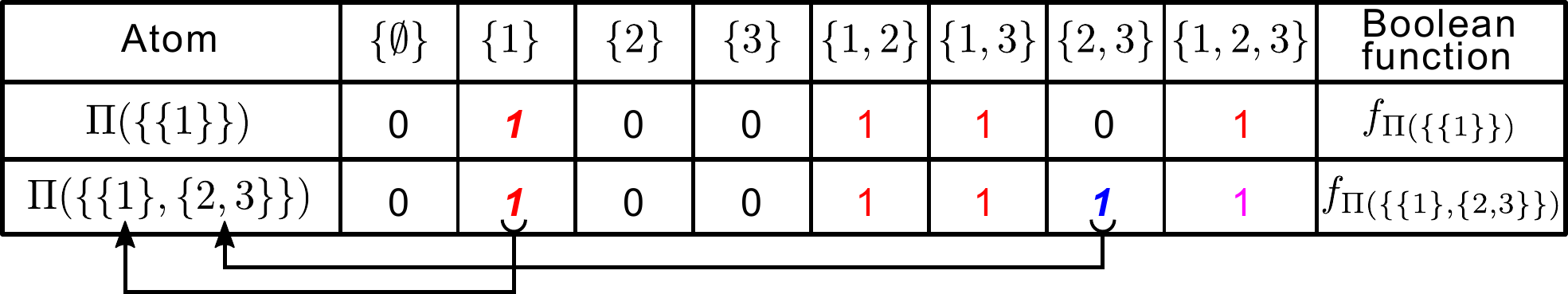}
    \caption{\textbf{Parthood distributions are isomorphic to PID atoms} In this figure, we use the shorthand notation by dropping the $S$'s and resorting only to the indices of source variables. The first atom $\Pi(\{1\})$ is the unique information of $S_1.$ This information is part of $S_1$ (i.e. $\{1\}$) and by principle (iii) any superset of $\{1\}$. The boolean function $f_{\Pi(\{1\})}$ realized from the parthood distribution maps $\{1\}$ and all its supersets to 1 whereas the rest are mapped to zero. The second atom $\Pi(\{\{1\},\{2,3\}\})$ is the information that is exclusively redundant between $S_1$ and the joint source variable of $S_2$ and $S_3$, but not redundant between $S_1$ and any strict subsets of $\{S_2,S_3\}$. This can be realized from the parthood distribution as the information is part of $\{1\}$ and by principle (iii) all its supersets,  however, additionally this information is also part of $\{2,3\}.$ Finally this parthood distribution is packed into the monotone function boolean function $f_{\Pi(\{\{1\},\{2,3\}\})}.$}
    \label{fig:parthood-dist}
\end{figure}

Therefore, the number of PID atoms for $n$ sources is equal to the number of monotone boolean functions over $\mathcal P(\{1, \dots, n\})$ minus 2 since the two monotone boolean functions $\Phi(\mathbf{a})= 0$ for all $\mathbf{a}$ and $\Phi(\mathbf{a})= 1$ for all $\mathbf{a}$ are excluded by principles (i) and (ii), respectively~\cite{gutknecht_bits_2021}. This entails that the number of atoms grows super-exponentially as the number of monotone boolean functions, i.e., as the Dedekind numbers~\cite{beckurts1897zerlegungen}.

\subsubsection{Equivalence between parthood distributions and antichains}
\label{apx:parthood_antichain_equivalence}

In this section, we explain the equivalence between the parthood distribution and antichains that has been demonstrated by \citet[Section 2]{gutknecht_bits_2021}. An information atom $\Pi$ can be equivalently referenced by either its parthood distribution $\Phi$ as $\Pi(T:\bm S_\Phi)$ or its corresponding antichain $\alpha$ as $\Pi(T:\bm S_\alpha)$. This equivalence is given by the facts that (i) the parthood distributions $\Phi: \mathcal{P}(\{1, \dots, n\}) \to \{0, 1\}$ are \emph{monotonic}, i.e.,~they fulfill the relation $\bm a \subseteq \bm b \Rightarrow \Phi(\bm a) \leq \Phi(\bm b)$, and (ii) antichains are succinct representations of such monotonic boolean functions. To gain a deeper insight into the equivalent descriptions of the atoms, we will thus address these two points one by one in the following.

\citeauthor{gutknecht_bits_2021}\ introduces the monotonicity of the parthood distribution as a constraint in its definition~\cite[definition 2.1.]{gutknecht_bits_2021}. In fact, one can see that this monotonicity is in line with an essential property of mutual information: The mutual information $I(T:\bm S_{\bm a})$ between the target $T$ and a subset of sources $\bm S_{\bm a}$ is contained in the mutual information $I(T:\bm S_{\bm b})$ whenever $\bm b$ is a superset of $\bm a$. This fact is a result of the chain rule of mutual information and the non-negativity of (conditional) mutual information \citep{cover2006elements} $I(T:\bm S_{\bm b}) = I(T:\bm S_{\bm a}, \bm S_{\bm b \backslash \bm a}) = I(T: \bm S_{\bm a}) $ $+ I(T: \bm S_{\bm b \backslash \bm a}| \bm S_{\bm a}) \geq I(T: \bm S_{\bm a})$ which implies that the correlation between $S_{\bm a}$ and $T$ still persists in the correlation of $S_{\bm b}$ and $T$ and in fact only additional correlation can emerge with added sources. Thus when thinking about atoms that decompose the mutual information, it is only natural to evoke the constraint that if an atom $\Pi$ is part of the mutual information $I(T:\bm S_{\bm a})$, i.e.,~$\Phi(\bm a) = 1$, it also has to be part of $I(T:\bm S_{\bm b})$, i.e.,~$\Phi(\bm b) = 1$ whenever $\bm a \subseteq \bm b$, which is exactly the property of monotonicity: $\bm a \subseteq \bm b \Rightarrow \Phi(\bm a) \leq \Phi(\bm b)$.

So far we saw that parthood distributions are monotonic boolean functions and we will discuss how to represent them in a succinct way as antichains. To this end, note that a boolean function is uniquely defined by the set of inputs $f^{-1}[\{1\}]:= \{\bm a\in\mathcal{P}(\{1, \dots, n\}) \mid f(\bm a) = 1\}$ that are mapped to ``1''. Given the constraint of monotonicity, this representation can be compressed even further: Any sets in $f^{-1}[\{1\}]$ which are supersets of other sets in $f^{-1}[\{1\}]$ need not be recorded, as these are forced to map to ``1'' by monotonicity. Thus, a monotonic boolean function can be represented by $\alpha \subseteq f^{-1}[\{1\}]$, constructed by removing all sets in $f^{-1}[\{1\}]$ which are supersets of others. The resulting set of sets $\alpha$ then contains only sets which are incomparable given the partial order of set inclusion, which are referred to in the literature as ``antichains'' \citep{williams_nonnegative_2010}.

Therefore, each PID $I(T:\bm S_\Phi)$ atom identified with a parthood distribution $\Phi$ can be represented by the antichain representation of $\Phi$, which provides an equivalent labelling of the atom as $I(T:\bm S_\alpha)$.

\subsubsection{The lattice structure of PID using redundant information}
\label{apx:lattice}

In this section, we will discuss the well-known redundancy lattice and how this lattice structure of PID \citep{williams_nonnegative_2010} arises when using redundant information. First, recall that redundant information $I_\cap(T:\bm S_\Phi)$ is the information that is part of all $S_{\bm a}$ for which $\Phi(\bm a ) = 1$. This means that $I_\cap(T: \bm S_\Phi)$ is part of all the mutual information terms $I(T: S_{\bm a})$ where $\Phi(\bm a)=1$. This part-whole relationship can be exploited to formulate a partial ordering of the general redundancies $I_\cap(T:\bm S_\Phi)$. We say that that a redundancy $I_\cap(T:\bm S_\Phi)$ is contained in another redundancy $I_\cap(T:\bm S_\Psi)$ if the redundancy $I_\cap(T:\bm S_\Phi)$ is necessarily a part of all the mutual information that $I_\cap(T:\bm S_\Psi)$ is part of. This implies that any part of the information contained in $I_\cap(T:\bm S_\Phi)$ is also necessarily contained in $I_\cap(T:\bm S_\Psi)$ which forms the basis of this ordering. This ordering of the redundant information $I_\cap(T:\bm S_\Phi)$ can be expressed in terms of parthood distributions as \citep[Subsection 2.iii]{gutknecht_bits_2021}: 

\begin{equation}
    \Phi \sqsubseteq \Psi \Leftrightarrow (\Phi(\bm a) = 1 \to \Psi(\bm a) = 1 \text{ for any } \bm a \subseteq \{1, \dots, n\}).
\end{equation}

Using the equivalence between parthood distributions and antichains described in \autoref{apx:parthood_antichain_equivalence}, this partial order can be similarly expressed in terms of antichains as \citep{williams_nonnegative_2010}

\begin{equation}
    \alpha \preceq \beta \Leftrightarrow \forall \bm b \in \beta \; \exists \bm a \in \alpha : a \subseteq b.
\end{equation}

This partial ordering defines the algebraic structure of the ``Redundancy Lattice''. We can now define the PID atoms as the partial differences on this lattice, i.e., the atom $\Pi(T:\bm S_\Phi)$ contains all the information that the redundancy $I_\cap(T:\bm_\Phi)$ has \emph{in excess of} the sum of all atoms below $\Phi$ on the lattice. Conversely, this allows the redundancies $I_\cap(T:\bm S_\Phi)$ to be the sum of all the atoms that are identified by a parthood distribution that is less than or equal $\Phi$ since all of the information in these atoms is contained in $\Pi(T:\bm S_\Phi)$ via the redundancy ordering. This can be equivalently expressed in terms of antichain (due to the aforementioned ordering) yielding the system of equations discussed in the main text:    
\begin{equation}
    I_\cap(T: \bm S_\alpha) = \sum_{\beta \preceq \alpha} \Pi(T:\bm S_\beta).
\end{equation}
This system of equations states that for any node $\alpha$ of the ``Redundancy Lattice'', there is a redundant information that is the sum of the atom $\Pi(T:\bm S_\beta)$ for any $\beta \preceq \alpha$. This system of equations can be inverted to compute the amount of information in each individual atom using the Moebius Inversion due to the distributive structure of the Redundancy lattice~\citep{williams_nonnegative_2010}.

\subsection{Mathematical proofs}
\subsubsection{Proof of bounds on representational complexity by coarse-graining}
\label{apx:coarse-graining}

To make representational complexity applicable to settings with more source random variables, we propose coarse-graining, i.e.,~combining source variables to form fewer, but higher dimensional, ``super variables'', as a suitable procedure. As a first step, we clarify how the new coarse-grained variables are constructed from the original variables using a \textit{coarse-grain mapping}.

\begin{definition}[Coarse-Grain mapping]
For $n, \tilde n \in \mathbb{N}_{> 0}$ and $\tilde{n} < n$, an $n$-to-$\tilde n$ coarse-grain mapping $f : \{1, \dots, n\} \to \{1, \dots, \tilde{n}\}$ is a surjective function that maps variable indices to fewer coarse-grained variable indices.
\end{definition}

We write the pre-image of the coarse-grain mapping for subsets of coarse-grained source indices $\bm{\tilde{a}} \subseteq \{1, \dots, \tilde{n}\}$ as $f^{-1}[\bm{\tilde{a}}] = \{i \in \{1, \dots, n\} | f(i) \in \bm{\tilde{a}}\}$.

Furthermore, we write sets of random variables indexed by the set of indices $\bm{a} \in \mathcal{P}(\{1, \dots, n\})$ as $\bm{S}_{\bm{a}} = \left\{S_i \middle| i \in \bm{a} \right\}$ and finally sets of sets of random variables indexed by sets of sets of indices $\alpha \in \mathcal{P}(\mathcal{P}(\{1, \dots, n\}))$ as $\bm{S}_\alpha = \left\{\bm{S}_{\bm{a}} \middle| \bm{a} \in \alpha \right\}$.

Using this notion of a coarse-grain mapping, we can define what a coarse-grained random variables is.

\begin{definition}[Coarse-Grained Random Variable]
Given a vector-valued random variable $\bm{S} = \left(S_1, \dots, S_n\right)$ and an $n$-to-$\tilde n$ coarse-grain mapping $f$, the coarse-grained vector-valued random variable $\bm{\tilde S}$ is defined as $\bm{\tilde{S}} = (\bm{\tilde{S}}_1, \dots, \bm{\tilde{S}}_{\tilde n})$, where the elements $\bm{\tilde{S}}_i = \bm S_{f^{-1}[\{i\}]} = \left\{S_k \middle| f(k) = i\right\}$ are themselves vector-valued random variables, called coarse-grained variables, partitioning the $n$ original variables into $\tilde n$ variables. 
\end{definition}
\begin{example} Given four source variables $\bm{S} = (S_1, S_2, S_3, S_4)$, the mapping $$f:\{1, 2, 3, 4\}\to\{1, 2\},\; i \mapsto \begin{cases} 1,\; i\in\{1, 2\}\\2,\; i\in\{3, 4\}\end{cases} $$ defines the two coarse-grained random variables $\bm{\tilde{S}}_1 = (S_1, S_2)$ and $\bm{\tilde{S}}_2 = (S_3, S_4)$.
\end{example}
In general, a coarse-grain mapping can produce coarse-grained variables consisting of different numbers of original variables. An important special case, however, is that of the \textit{uniform coarse-grain mapping}, which always maps $d$ source variables to one $d$-dimensional ``super-variable''.

\begin{definition}[Uniform coarse-grain mapping]
A uniform coarse-grain mapping of order $d \in \mathbb{N}_{>0}$ such that $d|n$ is given by $f:\{1, \dots, n\} \to \{1, \dots, n/d\}, \; i \mapsto \left\lfloor (i-1)/d \right\rfloor + 1$, where $\left\lfloor \cdot \right\rfloor$ refers to rounding down to the nearest integer.
\end{definition}

Having defined the coarse-grained variables $\bm{\tilde{S}}$, the next question that arises is how the PID atoms of the original variables can be combined to form the coarse-grained PID atoms of $\bm{\tilde{S}}$.

\begin{theorem}[Coarse-graining of PID atoms]\label{thm:coarse-graining-of-pid-atoms}
The PID atoms $\Pi(T: \bm{\tilde{S}}_{\tilde{\Phi}})$ of the coarse-grained source variable $\bm{\tilde{S}}$ are composed of the PID atoms $\Pi(T: \bm{S}_\Phi)$ of the original sources $\bm S$ as
\begin{align}\label{eq:atoms-in-coarsed-atoms}
    \Pi\left(T:\bm{\tilde{S}}_ {\tilde{\Phi}}\right) = \sum_{\Phi:\Phi\circ f^{-1} = \tilde{\Phi}} \Pi\left(T:\bm{S}_ {\Phi}\right).
\end{align}
\end{theorem}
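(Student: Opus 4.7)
The plan is to reduce the statement to a uniqueness argument via Möbius inversion on the coarse redundancy lattice. I set $X(\tilde{\Phi}) := \sum_{\Phi:\,\Phi\circ f^{-1}=\tilde{\Phi}}\Pi(T:\bm{S}_\Phi)$ for the proposed right-hand side. Since the coarse PID atoms $\Pi(T:\bm{\tilde{S}}_{\tilde{\Phi}})$ are the unique family whose sum over $\tilde{\Psi}\sqsupseteq\tilde{\Phi}$ equals the coarse redundancy $I_\cap(T:\bm{\tilde{S}}_{\tilde{\Phi}})$, it is enough to verify that $X$ satisfies this same Möbius relation for every valid $\tilde{\Phi}$.

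Three preliminary facts come first. (i) $\Phi\circ f^{-1}$ is a legitimate coarse parthood distribution: monotonicity is inherited because $f^{-1}$ preserves inclusions, $(\Phi\circ f^{-1})(\emptyset)=\Phi(\emptyset)=0$, and surjectivity of $f$ gives $(\Phi\circ f^{-1})(\{1,\dots,\tilde{n}\})=\Phi(\{1,\dots,n\})=1$. (ii) For a coarse antichain $\tilde{\alpha}$, the pullback $\alpha^{*} := \{f^{-1}[\bm{\tilde{a}}]:\bm{\tilde{a}}\in\tilde{\alpha}\}$ is itself an antichain in $\mathcal{P}(\{1,\dots,n\})$, since surjectivity yields $f[f^{-1}[\bm{\tilde{c}}]]=\bm{\tilde{c}}$, so strict incomparability of elements of $\tilde{\alpha}$ is preserved. (iii) Because $\bm{\tilde{S}}_{\bm{\tilde{a}}}$ and $\bm{S}_{f^{-1}[\bm{\tilde{a}}]}$ are literally the same random vector, any distribution-based redundancy measure satisfies $I_\cap(T:\bm{\tilde{S}}_{\tilde{\alpha}})=I_\cap(T:\bm{S}_{\alpha^{*}})$, and Möbius inversion on the fine lattice gives
\begin{equation*}
I_\cap(T:\bm{\tilde{S}}_{\tilde{\Phi}}) \;=\; \sum_{\Psi\sqsupseteq\Phi_{\alpha^{*}}}\Pi(T:\bm{S}_\Psi),
\end{equation*}
where $\Phi_{\alpha^{*}}$ denotes the parthood distribution of $\alpha^{*}$.

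The crux of the proof, and the step I expect to be the main obstacle, is the following equivalence between two different partial orders living on two different lattices:
\begin{equation*}
\Phi\sqsupseteq\Phi_{\alpha^{*}}\;\Longleftrightarrow\;\Phi\circ f^{-1}\sqsupseteq\tilde{\Phi}.
\end{equation*}
The forward direction will exploit monotonicity of $\tilde{\Phi}$: whenever $\tilde{\Phi}(\bm{\tilde{a}})=1$, some $\bm{\tilde{a}}'\in\tilde{\alpha}$ sits below $\bm{\tilde{a}}$, hence $f^{-1}[\bm{\tilde{a}}']\in\alpha^{*}$ sits below $f^{-1}[\bm{\tilde{a}}]$, which forces $\Phi_{\alpha^{*}}(f^{-1}[\bm{\tilde{a}}])=1$ and therefore $\Phi(f^{-1}[\bm{\tilde{a}}])=1$ by hypothesis. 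The reverse direction will exploit monotonicity of $\Phi$ itself: if $\Phi_{\alpha^{*}}(\bm{a})=1$, some $\bm{\tilde{b}}\in\tilde{\alpha}$ satisfies $f^{-1}[\bm{\tilde{b}}]\subseteq\bm{a}$; the hypothesis supplies $\Phi(f^{-1}[\bm{\tilde{b}}])=1$, and monotonicity then propagates this upward along $f^{-1}[\bm{\tilde{b}}]\subseteq\bm{a}$ to $\Phi(\bm{a})=1$. The subtlety here is that one must resist the tempting identification $\Phi_{\alpha^{*}}(\bm{a})\stackrel{?}{=}\tilde{\Phi}(f(\bm{a}))$, which breaks because arbitrary $\bm{a}$ need not be of the form $f^{-1}[\bm{\tilde{c}}]$; the argument has to traverse the antichain elements of $\tilde{\alpha}$ directly and use monotonicity at the end.

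Once the equivalence is in place, the computation closes at once:
\begin{align*}
\sum_{\tilde{\Psi}\sqsupseteq\tilde{\Phi}}X(\tilde{\Psi})
&= \sum_{\Phi:\,\Phi\circ f^{-1}\sqsupseteq\tilde{\Phi}}\Pi(T:\bm{S}_\Phi)
= \sum_{\Phi\sqsupseteq\Phi_{\alpha^{*}}}\Pi(T:\bm{S}_\Phi) \\
&= I_\cap(T:\bm{S}_{\alpha^{*}}) = I_\cap(T:\bm{\tilde{S}}_{\tilde{\Phi}}),
\end{align*}
so $X$ obeys the coarse Möbius relation, and uniqueness of Möbius inversion on the coarse lattice delivers $X(\tilde{\Phi})=\Pi(T:\bm{\tilde{S}}_{\tilde{\Phi}})$, which is exactly the claim.
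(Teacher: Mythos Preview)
Your proof is correct and in fact more complete than the paper's. The paper argues briefly at the level of parthood semantics: from $I(T:\bm{\tilde{S}}_{\bm{\tilde{a}}}) = I(T:\bm{S}_{f^{-1}[\bm{\tilde{a}}]})$ it reads off that a fine atom $\Pi(T:\bm{S}_\Phi)$ contributes to coarse mutual-information terms exactly according to $\Phi\circ f^{-1}$, and then asserts that the coarse atom labelled $\tilde{\Phi}$ must be the sum of all fine atoms sharing that pattern. That final step is intuitively right but is not justified in the text---matching only the $2^{\tilde{n}}-1$ mutual-information constraints does not by itself determine the atoms uniquely. You close precisely this gap by working with the full redundancy lattice rather than only the mutual-information terms: the ingredients the paper does not spell out are the identity $I_\cap(T:\bm{\tilde{S}}_{\tilde{\alpha}})=I_\cap(T:\bm{S}_{\alpha^{*}})$ for \emph{every} coarse antichain (not just singletons) and the order equivalence $\Phi\sqsupseteq\Phi_{\alpha^{*}}\Leftrightarrow\Phi\circ f^{-1}\sqsupseteq\tilde{\Phi}$, which together let you invoke uniqueness of M\"obius inversion on the coarse lattice. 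The paper's version is shorter and conveys the right picture; yours is the argument that actually establishes that the grouped sums coincide with the M\"obius-defined coarse atoms, and it does so for any redundancy measure that depends only on the joint distribution of the relevant source collections.
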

\begin{proof}

A PID atom $\Pi\left(T:\bm{S}_\Phi\right)$ is a part of the coarse-grained PID atom $\Pi\left(T:\bm{\tilde{S}}_{\tilde{\Phi}}\right)$ exactly if it contributes to the same mutual information terms $I\left(T:\bm{\tilde{S}}_{\bm{\tilde{a}}}\right)$ with the coarse-grained variables $\bm{\tilde{S}}_{\bm{\tilde{a}}}$. Since the identity
\begin{equation*}
    I\left(T:\bm{\tilde{S}}_{\bm{\tilde{a}}}\right) = I\left(T:\bm{S}_{f^{-1}[\bm{\tilde{a}}]}\right)
\end{equation*}
follows readily from the definition of the coarse-grained variables, we find that if $\Pi\left(T:\bm{S}_\Phi\right)$ is part of $I\left(T:\bm{S}_{f^{-1}[\bm{\tilde{a}}]}\right)$, i.e., $\Phi(f^{-1}[\bm{\tilde{a}}]) = 1$, it is also part of $I\left(T:\bm{\tilde{S}}_{\bm{\tilde{a}}}\right)$ in the coarse-grained picture, and vice-versa. Thus, the parthood relations of $\Pi\left(T:\bm{S}_\Phi\right)$ with regards to the coarse-grained variables are captured by the parthood distribution $\tilde{\Phi} = \Phi \circ f^{-1}$, which leads to the conclusion that it must be part of the coarse-grained atom $\Pi\left(T:\bm{\tilde{S}}_{\tilde{\Phi}}\right)$. The coarse-grained atoms must now be the sums of all original atoms which contribute to the same coarse-grained mutual information terms; thus the theorem follows.

\end{proof}

This coarse-grained PID naturally gives rise to both a lower and an upper bound on the representational complexity of the original variables. As a first step, we show how representational complexity can be equally computed from the parthood distribution $\Phi$ of an atom.

\begin{lemma}[Computing the degree of synergy from a parthood distribution]
The degree of synergy of the atom indexed by the parthood distribution $\Phi$ is given by $m(\Phi) = \min_{\Phi(\bm{a})=1}|\bm{a}|$.
\end{lemma}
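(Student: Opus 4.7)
The plan is to reduce this statement directly to the definition of $m$ already given in \autoref{eq:m} via the parthood-distribution/antichain equivalence discussed in \autoref{apx:parthood_antichain_equivalence}. Recall that the antichain $\alpha$ associated with $\Phi$ is precisely the set of minimal elements of $\Phi^{-1}[\{1\}] = \{\bm a \subseteq \{1,\dots,n\} \mid \Phi(\bm a) = 1\}$ under set inclusion; equivalently, $\alpha$ is obtained from $\Phi^{-1}[\{1\}]$ by removing every set that properly contains another element of $\Phi^{-1}[\{1\}]$. With this identification in hand, the lemma asserts simply that $\min_{\bm a \in \alpha}|\bm a| = \min_{\Phi(\bm a) = 1}|\bm a|$, which I will establish by two straightforward inequalities.

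For the inequality $\min_{\Phi(\bm a)=1}|\bm a| \leq \min_{\bm a \in \alpha}|\bm a|$, I would observe that every $\bm a \in \alpha$ satisfies $\Phi(\bm a) = 1$ by construction, so $\alpha \subseteq \Phi^{-1}[\{1\}]$ and the minimum over the smaller set is at least the minimum over the larger set. For the reverse inequality, take any $\bm a$ with $\Phi(\bm a) = 1$. Since $\{1,\dots,n\}$ is finite, one can find a minimal-under-inclusion element $\bm a' \in \Phi^{-1}[\{1\}]$ with $\bm a' \subseteq \bm a$: if $\bm a$ is itself minimal we take $\bm a' = \bm a$, otherwise descend to a strict subset still mapped to $1$ and iterate, a process that terminates in finitely many steps. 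By definition of $\alpha$ as the set of such minimal elements, $\bm a' \in \alpha$, and $|\bm a'| \leq |\bm a|$. Taking the minimum over $\bm a \in \Phi^{-1}[\{1\}]$ yields $\min_{\bm a \in \alpha}|\bm a| \leq \min_{\Phi(\bm a)=1}|\bm a|$.

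Combining the two inequalities gives $\min_{\Phi(\bm a)=1}|\bm a| = \min_{\bm a \in \alpha}|\bm a| = m(\alpha)$, and since the atom indexed by $\Phi$ is the same as the atom indexed by $\alpha$, this common value is by \autoref{eq:m} exactly $m(\Phi)$, proving the lemma. There is no substantial obstacle here; the only step that requires any care is making the descent argument explicit so that the existence of a minimal subset $\bm a' \in \alpha$ below any $\bm a \in \Phi^{-1}[\{1\}]$ is clearly justified using monotonicity of $\Phi$ and finiteness of the index set.
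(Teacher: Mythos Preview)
Your proposal is correct and takes essentially the same approach as the paper: both arguments reduce to the observation that $\Phi^{-1}[\{1\}]$ consists of the antichain $\alpha$ together with strict supersets of its elements, so the extra sets have strictly larger cardinality and do not affect the minimum. The paper states this in one line while you split it into two inequalities with an explicit descent to a minimal element, but the content is identical; as a minor remark, the descent step needs only finiteness, not monotonicity of $\Phi$.
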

\begin{proof}
The parthood distribution $\Phi$ corresponding to an antichain $\alpha$ is the boolean function that maps all $\bm{a}\in\alpha$ and all supersets thereof to one. This means, in addition to all sets $\bm{a}\in\alpha$, that $\Phi^{-1}[\{1\}]$, the fibre of $1$ under $\Phi$, contains only sets $\bm{a}' \supset \bm{a}$, for which $|\bm{a}'| > |\bm{a}|$ and which thus have no influence on the minimum cardinality of sets:

\begin{equation}
    m(\alpha) := \min_{\bm{a}\in\alpha}|\bm{a}| = \min_{\bm{a}\in\Phi^{-1}[\{1\}]}|\bm{a}| = \min_{\Phi(\bm{a})=1}|\bm{a}| =: m(\Phi)\,.
\end{equation}
\end{proof}

The next step in our quest to prove bounds on the representational complexity is to prove bounds on the degree of synergy $m$.

\begin{lemma}
Let $\Phi$ and $\tilde{\Phi}$ refer to the parthood distributions of an atom and a coarse-grained atom, respectively. If $\Pi(T:\bm{S}_\Phi)$ is part of $\Pi(T:\bm{\tilde{S}}_{\tilde{\Phi}})$, then the degree of synergy $m(\Phi)$ is constrained by the degree of synergy $m(\tilde{\Phi})$ as $m(\tilde{\Phi}) \leq m(\Phi) \leq d\, m(\tilde{\Phi})$, where the upper bound holds for a uniform coarse-graining of order $d$.
\label{lemma:coarse_deg_syn}
\end{lemma}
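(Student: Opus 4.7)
The plan is to turn the hypothesis into the explicit relation $\tilde{\Phi} = \Phi\circ f^{-1}$ (this is exactly the condition from \autoref{thm:coarse-graining-of-pid-atoms} under which $\Pi(T:\bm S_\Phi)$ appears in the sum composing $\Pi(T:\bm{\tilde S}_{\tilde\Phi})$), and then use the preceding lemma to express both degrees of synergy as
\begin{equation*}
m(\Phi)=\min_{\Phi(\bm a)=1}|\bm a|,\qquad m(\tilde\Phi)=\min_{\tilde\Phi(\bm{\tilde a})=1}|\bm{\tilde a}|.
\end{equation*}
Each bound will then be obtained by constructing an explicit witness: I push forward a minimizer of $m(\Phi)$ under $f$ to bound $m(\tilde\Phi)$, and I pull back a minimizer of $m(\tilde\Phi)$ under $f^{-1}$ to bound $m(\Phi)$.

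For the lower bound $m(\tilde\Phi)\le m(\Phi)$, I take a minimizer $\bm a^\star\subseteq\{1,\dots,n\}$ with $\Phi(\bm a^\star)=1$ and $|\bm a^\star|=m(\Phi)$, and set $\bm{\tilde a}^\star := f[\bm a^\star]$. Then $f^{-1}[\bm{\tilde a}^\star]\supseteq \bm a^\star$, and since any parthood distribution is monotone (being the boolean indicator of an upward-closed family of sets, as discussed in \autoref{apx:parthood_distributions_pid_atoms} and \autoref{apx:parthood_antichain_equivalence}), $\Phi(f^{-1}[\bm{\tilde a}^\star])\ge \Phi(\bm a^\star)=1$, i.e.\ $\tilde\Phi(\bm{\tilde a}^\star)=1$. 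Since $f$ can only collapse elements, $|\bm{\tilde a}^\star|=|f[\bm a^\star]|\le|\bm a^\star|=m(\Phi)$, yielding $m(\tilde\Phi)\le m(\Phi)$. Note that this direction uses only surjectivity and monotonicity and hence holds for any coarse-grain mapping, not just uniform ones.

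For the upper bound $m(\Phi)\le d\,m(\tilde\Phi)$, I take a minimizer $\bm{\tilde a}^\star\subseteq\{1,\dots,\tilde n\}$ with $\tilde\Phi(\bm{\tilde a}^\star)=1$ and $|\bm{\tilde a}^\star|=m(\tilde\Phi)$, and consider $\bm a^\star:=f^{-1}[\bm{\tilde a}^\star]$. By the hypothesis $\tilde\Phi=\Phi\circ f^{-1}$, we have $\Phi(\bm a^\star)=\tilde\Phi(\bm{\tilde a}^\star)=1$, so $\bm a^\star$ is a feasible argument in the minimization defining $m(\Phi)$. Here the uniform-order-$d$ assumption enters crucially: each fibre $f^{-1}[\{i\}]$ contains exactly $d$ indices, so $|\bm a^\star|=\sum_{i\in\bm{\tilde a}^\star}|f^{-1}[\{i\}]|=d|\bm{\tilde a}^\star|=d\,m(\tilde\Phi)$, giving $m(\Phi)\le d\,m(\tilde\Phi)$.

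The main thing to be careful about is not the inequalities themselves but making sure the right direction of monotonicity and the right use of the Theorem~\ref{thm:coarse-graining-of-pid-atoms} identity $\tilde\Phi=\Phi\circ f^{-1}$ is invoked in each half: in the lower bound the witness set $\bm a^\star$ need not equal $f^{-1}[f[\bm a^\star]]$, which is why monotonicity of $\Phi$ is needed, whereas in the upper bound the witness set $f^{-1}[\bm{\tilde a}^\star]$ applies the identity directly and monotonicity is not required. The uniform-order-$d$ hypothesis is used solely to convert the counting $|f^{-1}[\bm{\tilde a}^\star]|$ into $d|\bm{\tilde a}^\star|$; for non-uniform coarse-grainings the same argument yields $m(\Phi)\le \max_i|f^{-1}[\{i\}]|\cdot m(\tilde\Phi)$.
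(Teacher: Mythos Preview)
Your proof is correct and follows essentially the same route as the paper's: both start from the identity $\tilde\Phi=\Phi\circ f^{-1}$ furnished by \autoref{thm:coarse-graining-of-pid-atoms} and compare the two minima via the cardinality inequalities $|f[\bm a]|\le|\bm a|$ and, in the uniform case, $|f^{-1}[\bm{\tilde a}]|=d|\bm{\tilde a}|$. The only cosmetic difference is that the paper rewrites $m(\tilde\Phi)$ as $\min_{\Phi(\bm a)=1}|f(\bm a)|$ and bounds pointwise, whereas you construct explicit push-forward and pull-back witnesses; your version has the small advantage of making explicit the appeal to monotonicity of $\Phi$ that the paper's equality $\min_{\Phi\circ f^{-1}(\bm{\tilde a})=1}|\bm{\tilde a}|=\min_{\Phi(\bm a)=1}|f(\bm a)|$ uses implicitly.
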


\begin{proof}
Let the atom $\Pi(T:\bm{S}_\Phi)$ be part of the coarse-grained atom $\Pi(T:\bm{\tilde{S}}_{\tilde{\Phi}})$. It follows from Equation \eqref{eq:atoms-in-coarsed-atoms} that $\tilde{\Phi} = \Phi\circ f^{-1}$ and therefore

\begin{align*}
m(\tilde{\Phi}) &= \min_{\tilde{\Phi}(\bm{\tilde{a}})=1}|\bm{\tilde{a}}| = \min_{\Phi\circ f^{-1}(\bm{\tilde{a}})=1}|\bm{\tilde{a}}| = \min_{\Phi(\bm{a})=1}|f(\bm{a})| \leq \min_{\Phi(\bm{a})=1}|\bm{a}| = m(\Phi)\,,
\end{align*}

where the fact that $|f(\bm{a})|\leq|\bm{a}|$ has been used. Note similarly that for a uniform coarse-graining of order $d$, $d|f(\bm{a})|\geq|\bm{a}|$ holds, hence one finds a lower bound to $m(\tilde{\Phi})$ as

\begin{align*}
m(\tilde{\Phi}) = \min_{\Phi(\bm{a})=1}|f(\bm{a})| \geq \frac{1}{d}\min_{\Phi(\bm{a})=1}|\bm{a}| = \frac{1}{d}m(\Phi)\,.
\end{align*}

\end{proof}

From the bounds on the synergistic degree, the bounds on the representational complexity - which is a weighted average of synergistic degrees - are straightforward to derive.

\begin{theorem}
The representational complexity of a vector of sources $\bm S = (S_1, \dots, S_n)$ with respect to some target $T$ is bounded from below by the representational complexity of any coarse-graining $\bm{\tilde{S}}$
\begin{equation}
    C(T:\bm{\tilde{S}}) \leq C(T:\bm S)\,.
\end{equation}
\end{theorem}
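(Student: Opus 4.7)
The plan is to directly unfold the definition of $C(T:\bm{\tilde S})$, substitute the atom decomposition from \autoref{thm:coarse-graining-of-pid-atoms}, and then use the degree-of-synergy bound from \autoref{lemma:coarse_deg_syn} termwise. The key observation to make up front is that coarse-graining preserves the total mutual information, i.e.\ $I(T:\bm{\tilde S}) = I(T:\bm S)$: since $\bm{\tilde S}$ is merely a regrouping of the components of $\bm S$ into tuples, the joint distribution of $(T,\bm{\tilde S})$ is a deterministic bijective relabelling of $(T,\bm S)$, so they carry identical Shannon information. This means the normalising denominator in \autoref{eq:c} is the same for both complexities, and the inequality reduces to a statement about the numerator.

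Next I would write
\begin{equation*}
I(T:\bm S)\, C(T:\bm{\tilde S}) \;=\; \sum_{\tilde\Phi} \Pi(T:\bm{\tilde S}_{\tilde\Phi})\, m(\tilde\Phi)
\;=\; \sum_{\tilde\Phi} \sum_{\Phi:\,\Phi\circ f^{-1} = \tilde\Phi} \Pi(T:\bm S_\Phi)\, m(\tilde\Phi),
\end{equation*}
where the second equality uses \autoref{thm:coarse-graining-of-pid-atoms}. Since every parthood distribution $\Phi$ on the original sources induces a unique coarse-grained parthood distribution $\tilde\Phi = \Phi\circ f^{-1}$, the double sum reindexes into a single sum over $\Phi$, giving $\sum_\Phi \Pi(T:\bm S_\Phi)\, m(\Phi\circ f^{-1})$.

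Now I invoke \autoref{lemma:coarse_deg_syn}, which says $m(\Phi\circ f^{-1}) \leq m(\Phi)$ for every $\Phi$, and use non-negativity of the PID atoms (a property of $I^{\mathrm{sx}}_\cap$) to conclude termwise that
\begin{equation*}
\sum_\Phi \Pi(T:\bm S_\Phi)\, m(\Phi\circ f^{-1}) \;\leq\; \sum_\Phi \Pi(T:\bm S_\Phi)\, m(\Phi) \;=\; I(T:\bm S)\, C(T:\bm S).
\end{equation*}
Dividing by $I(T:\bm S)$ (which we may assume positive, else $C$ is undefined and the statement is vacuous) yields the claim.

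The only subtlety I foresee is justifying the reindexing step cleanly: I need to be explicit that the map $\Phi \mapsto \Phi\circ f^{-1}$ sends parthood distributions on $\mathcal{P}(\{1,\dots,n\})$ to parthood distributions on $\mathcal{P}(\{1,\dots,\tilde n\})$ and that it partitions the former into the fibres that appear on the right-hand side of \autoref{eq:atoms-in-coarsed-atoms}. Once that bookkeeping is in place, the proof is a one-line termwise bound, with non-negativity of the atoms being the only additional ingredient beyond the two results already established. No analogous difficulty arises for the upper bound $C(T:\bm S) \leq d\, C(T:\bm{\tilde S})$ either; it follows by the same route using the other half of \autoref{lemma:coarse_deg_syn}, though that inequality is not part of the statement being proved here.
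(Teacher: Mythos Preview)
Your proof is correct and follows essentially the same route as the paper's: unfold $C(T:\bm{\tilde S})$, substitute the atom decomposition from \autoref{thm:coarse-graining-of-pid-atoms}, apply the termwise bound $m(\tilde\Phi)\le m(\Phi)$ from \autoref{lemma:coarse_deg_syn}, and reindex the double sum into the definition of $C(T:\bm S)$. You are in fact slightly more explicit than the paper in two places---you spell out why $I(T:\bm{\tilde S})=I(T:\bm S)$ and you flag that non-negativity of the atoms is needed for the termwise inequality---both of which the paper uses silently.
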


\begin{proof}

\begin{align*}
    C(T:\bm{\tilde{S}}) &= \frac{1}{I(T:\bm{\tilde{S}})}\sum_{\tilde{\Phi}} \Pi(T:\bm{\tilde{S}}_{\tilde{\Phi}}) \, m(\tilde{\Phi})\\
    &=\frac{1}{I(T:\bm{\tilde{S}})} \sum_{\tilde{\Phi}} \left(\sum_{\Phi:\Phi\circ f^{-1} = \tilde{\Phi}} \Pi\left(T:\bm{S}_ {\Phi}\right) \right)\, m(\tilde{\Phi}) && \text{(\autoref{thm:coarse-graining-of-pid-atoms})}\\
    &\leq \frac{1}{I(T:\bm{\tilde{S}})} \sum_{\tilde{\Phi}} \sum_{\Phi:\Phi\circ f^{-1} = \tilde{\Phi}} \Pi\left(T:\bm{S}_ {\Phi}\right) \, m(\Phi) && \text{(\autoref{lemma:coarse_deg_syn})}\\
    &=\frac{1}{I(T:\bm{S})} \sum_\Phi \Pi\left(T:\bm{S}_ {\Phi}\right) \, m(\Phi)\\
    &= C(T:\bm{S})\,.
\end{align*}

\end{proof}

\begin{theorem}
The representational complexity of a vector of sources $\bm S = (S_1, \dots, S_n)$ with respect to some target $T$ is bounded from above by $d$ times the representational complexity of a uniform coarse-graining of order $d$

\begin{equation}
    C(T: \bm S) \leq d \, C(T: \bm{\tilde{S}})\,.
\end{equation}
\end{theorem}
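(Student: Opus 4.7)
The plan is to mirror the proof of the lower bound, simply replacing the lower estimate $m(\tilde{\Phi}) \leq m(\Phi)$ with the upper estimate $m(\Phi) \leq d \, m(\tilde{\Phi})$ that Lemma~\ref{lemma:coarse_deg_syn} guarantees \emph{precisely} under the assumption of uniform coarse-graining of order $d$. So I would begin by expressing the representational complexity of the original sources in terms of parthood distributions,
\begin{equation*}
    C(T:\bm S) = \frac{1}{I(T:\bm S)} \sum_\Phi \Pi(T:\bm S_\Phi) \, m(\Phi),
\end{equation*}
and partitioning the sum over $\Phi$ by grouping all parthood distributions whose pushforward under the coarse-grain mapping equals a given $\tilde{\Phi}$, i.e.\ writing $\sum_\Phi = \sum_{\tilde{\Phi}} \sum_{\Phi:\Phi\circ f^{-1}=\tilde{\Phi}}$.

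Next, I would apply the uniform upper bound $m(\Phi) \leq d\, m(\tilde{\Phi})$ from Lemma~\ref{lemma:coarse_deg_syn} inside the inner sum, where $m(\tilde{\Phi})$ is constant and can be pulled out. The remaining inner sum $\sum_{\Phi:\Phi\circ f^{-1}=\tilde{\Phi}} \Pi(T:\bm S_\Phi)$ collapses to $\Pi(T:\bm{\tilde{S}}_{\tilde{\Phi}})$ by Theorem~\ref{thm:coarse-graining-of-pid-atoms}, leaving $\frac{d}{I(T:\bm S)} \sum_{\tilde{\Phi}} \Pi(T:\bm{\tilde{S}}_{\tilde{\Phi}}) \, m(\tilde{\Phi})$.

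The final step is to identify the normalizations, which requires observing that $I(T:\bm S) = I(T:\bm{\tilde{S}})$. This holds because the coarse-grained tuple $\bm{\tilde{S}}$ is just a re-bracketing of the components of $\bm S$; no information about $T$ is added or lost. Equivalently, one can note that summing \eqref{eq:atoms-in-coarsed-atoms} over all $\tilde{\Phi}$ recovers $\sum_\Phi \Pi(T:\bm S_\Phi) = I(T:\bm S)$ on the left while yielding $I(T:\bm{\tilde{S}})$ on the right. With this identity in hand, the expression rewrites as $d\, C(T:\bm{\tilde{S}})$, completing the bound.

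I do not anticipate any substantial obstacle: the proof is dual to the lower-bound argument and the only nontrivial ingredient (the factor $d$) has already been absorbed into the lemma on degrees of synergy. The one place that deserves a brief justification is the equality $I(T:\bm S) = I(T:\bm{\tilde{S}})$, which is immediate from the definition of the coarse-grained random variable but must be stated explicitly so that the normalization in the definition of $C$ transforms cleanly. Care should also be taken to invoke Lemma~\ref{lemma:coarse_deg_syn} only for \emph{uniform} coarse-graining, since the upper bound $m(\Phi) \leq d\,m(\tilde{\Phi})$ does not hold for general coarse-grain mappings, matching the scope of the theorem.
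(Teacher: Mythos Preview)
Your proposal is correct and is essentially the paper's own proof run in the opposite direction: the paper starts from $C(T:\bm{\tilde{S}})$, expands the coarse-grained atoms via Theorem~\ref{thm:coarse-graining-of-pid-atoms}, applies $m(\tilde{\Phi}) \geq m(\Phi)/d$ from Lemma~\ref{lemma:coarse_deg_syn}, and arrives at $C(T:\bm{\tilde{S}}) \geq \frac{1}{d} C(T:\bm{S})$, whereas you start from $C(T:\bm{S})$ and group by $\tilde{\Phi}$ to reach $d\,C(T:\bm{\tilde{S}})$. The ingredients and logical content are identical; your explicit remark that $I(T:\bm{S}) = I(T:\bm{\tilde{S}})$ is used silently in the paper's penultimate step.
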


\begin{proof}

\begin{align*}
    C(T:\bm{\tilde{S}}) &= \frac{1}{I(T:\bm{\tilde{S}})}\sum_{\tilde{\Phi}} \Pi(T:\bm{\tilde{S}}_{\tilde{\Phi}}) \, m(\tilde{\Phi})\\
    &=\frac{1}{I(T:\bm{\tilde{S}})} \sum_{\tilde{\Phi}} \left(\sum_{\Phi:\Phi\circ f^{-1} = \tilde{\Phi}} \Pi\left(T:\bm{S}_ {\Phi}\right) \right)\, m(\tilde{\Phi}) && \text{(\autoref{thm:coarse-graining-of-pid-atoms})}\\
    &\geq \frac{1}{I(T:\bm{\tilde{S}})} \sum_{\tilde{\Phi}} \sum_{\Phi:\Phi\circ f^{-1} = \tilde{\Phi}} \Pi\left(T:\bm{S}_ {\Phi}\right) \, m(\Phi)/d && \text{(\autoref{lemma:coarse_deg_syn})}\\
    &=\frac{1}{d\,I(T:\bm{S})} \sum_\Phi \Pi\left(T:\bm{S}_ {\Phi}\right) \, m(\Phi)\\
    &= \frac{1}{d}C(T:\bm{S})\,.
\end{align*}

\end{proof}

\subsubsection{Proof of representational complexity of one-hot encoding}
\label{apx:onehot}

In DNNs solving a classification task, the output labels are typically encoded in a ``One-Hot Encoding'', in which there are as many neurons as classes with only the neuron corresponding to the correct class being one while all others are zero. Here we prove that all such encodings have a representational complexity of $C=1$.

\begin{definition}[One-hot encoding]
The one-hot encoding $\vec{Y}$ of a categorical variable $Y$ with finite ordered alphabet $A_Y = (y_1, y_2, \dots, y_n)$ is the image of the bijective mapping $$\vec{\cdot}: A_Y \to A_{\vec{Y}} \subset \{0, 1\}^n, y \mapsto \vec{y} = (\delta_{y, y_j})_j = (0, \dots, \underbrace{0}_{j-1}, \underbrace{1}_{j}, \underbrace{0}_{j+1}, \dots, 0)\,,$$ where $\delta_{y, y_j}$ is the Kronecker Delta.
\end{definition}
 
In what follows, we recall the concepts of local mutual information $i$, local $I^{\mathrm{sx}}_\cap$ redundancy $i^{\mathrm{sx}}_\cap$, and its additive decomposition into local informative redundancy $i^{\mathrm{sx}+}_\cap$ and local misinformative redundancy $i^{\mathrm{sx}-}_\cap$. These information functionals are needed in proving that any one-hot encoding has a representational complexity equals to one.
\begin{definition}[Local information and their informative and misinformative parts]
    Let $T$ be the target variable and $\bm S$ be the set of sources. Then, we have the following:
    \begin{itemize}
        \item The mutual information $I(T: \bm S)$ is, in fact, the expected value of the local mutual information $i(t: \bm s )$ as follows: 
            $$I(T: \bm S) := \sum_{t, \bm{s}} \mathbb{P}(\mathfrak{t} \cap \bm{\mathfrak{s}})\log_2 \frac{\mathbb{P}\left(\mathfrak{t} \cap\bm{\mathfrak{s}}\right)}{\mathbb{P}\left(\mathfrak{t}\right)\mathbb{P}\left(\bm{\mathfrak{s}}\right)} = \mathbb E_{t,\bm s}\left[ i(t:\bm s)\right]\,.$$
        \item The local mutual information $i(t: \bm s)$ can take negative values and so it is  decomposed into non-negative informative $i^+(t: \bm s)$ and misinformative $i^-(t: \bm s)$ parts as follows:
            \begin{equation*}
                \begin{split}
                    i^+(t: \bm s) &:= \log_2\frac{1}{\mathbb{P}\left(\bm{\mathfrak{s}}\right)} = \log_2\frac{1}{p_{\bm S}(\bm s)}\,,\\
                    i^-(t: \bm s) &:= \log_2\frac{\mathbb{P}\left(\mathfrak{t}\right)}{\mathbb{P}\left(\mathfrak{t} \cap\bm{\mathfrak{s}}\right)} = \log_2\frac{1}{p_{\bm S\mid T}(\bm s\mid t)}\,.
                \end{split}
            \end{equation*}
        \item The $I^{\mathrm{sx}}_\cap$ redundancy $I^{\mathrm{sx}}_\cap(T: \bm S_\alpha)$ is in its turn the expected value of the local redundant information $i^{\mathrm{sx}}(t: \bm s_{\alpha} )$ as follows:
            $$ I^{\mathrm{sx}}_\cap(T:\bm S_\alpha) = \sum_{t, \bm{s}} \mathbb{P}(\mathfrak{t} \cap \bm{\mathfrak{s}})
            \log_2
            \frac{\mathbb{P}\left(\mathfrak{t}\right) - \mathbb{P}\left(\mathfrak{t} \cap \bigcap_{\bm{a}\in\alpha}\bm{\bar{\mathfrak{s}}}_{\bm{a}}\right)}{\mathbb{P}\left(\mathfrak{t}\right)\left[1 - \mathbb{P}\left(\bigcap_{\bm{a}\in \alpha}\bm{\bar{\mathfrak{s}}}_{\bm{a}}\right) \right]} = \mathbb E_{t,\bm s}\left[ i^{\mathrm{sx}}_\cap(t:\bm s_\alpha)\right]\,.$$
        \item The local $I^{\mathrm{sx}}_\cap$ redundancy $i^{\mathrm{sx}}_\cap(t: \bm s_\alpha)$ can also take negative values and so it is decomposed into nonnegative informative $i^{\mathrm{sx}+}_\cap(t: \bm s_\alpha)$ and misinformative $i^{\mathrm{sx}-}_\cap(t: \bm s_\alpha)$ parts as follows:
            \begin{equation*}
                \begin{split}
                    i^{\mathrm{sx}+}_\cap(t: \bm s_\alpha) &:= \log_2\frac{1}{1 - \mathbb{P}\left(\bigcap_{\bm{a}\in \alpha}\bm{\bar{\mathfrak{s}}}_{\bm{a}}\right)} = \log_2\frac{1}{p_{\bm S_\alpha}(\bm s_\alpha)}\,,\\
                    i^{\mathrm{sx}-}_\cap(t: \bm s_\alpha) &:= \log_2\frac{\mathbb{P}\left(\mathfrak{t}\right)}{\mathbb{P}\left(\mathfrak{t}\right) - \mathbb{P}\left(\mathfrak{t} \cap \bigcap_{\bm{a}\in \alpha}\bm{\bar{\mathfrak{s}}}_{\bm{a}}\right)} = \log_2\frac{1}{p_{\bm S_\alpha\mid T}(\bm s_\alpha\mid t)}\,.
                \end{split}
            \end{equation*}
    \end{itemize}
\end{definition}

\begin{proposition}
   \label{lemma:no_mis}
The misinformative part of the redundancy $I_\cap^{\mathrm{sx}-}(T:\bm S_\alpha)$ vanishes if there exists a mapping $f: T \to \bm S$ from the target to the sources.
\end{proposition}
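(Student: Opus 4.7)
The plan is to compute $I_\cap^{\mathrm{sx}-}(T:\bm S_\alpha)$ directly from its definition as the expected value of the pointwise misinformative redundancy, and to show that every nonzero-weight summand is zero. Concretely, I would start from
\[
I_\cap^{\mathrm{sx}-}(T:\bm S_\alpha) = \sum_{t,\bm s}\mathbb{P}(\mathfrak{t}\cap\bm{\mathfrak{s}})\,\log_2\frac{1}{p_{\bm S_\alpha\mid T}(\bm s_\alpha\mid t)}
\]
and then translate the shared-exclusion quantity into standard event language: $p_{\bm S_\alpha\mid T}(\bm s_\alpha\mid t)=1-\mathbb{P}(\bigcap_{\bm a\in\alpha}\bm{\bar{\mathfrak{s}}}_{\bm a}\mid\mathfrak{t})=\mathbb{P}(\bigcup_{\bm a\in\alpha}\bm{\mathfrak{s}}_{\bm a}\mid\mathfrak{t})$, i.e.\ the conditional probability that at least one collection of sources indexed by some $\bm a\in\alpha$ takes its observed realization.

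Next I would use the hypothesis that $\bm S=f(T)$ for some deterministic mapping $f$. This forces the joint distribution to be supported only on pairs $(t,\bm s)$ with $\bm s = f(t)$, so the outer sum collapses onto this diagonal: all off-diagonal terms carry weight $\mathbb{P}(\mathfrak{t}\cap\bm{\mathfrak{s}})=0$ and can be dropped. For any on-diagonal $(t,f(t))$, and for each $\bm a\subseteq\{1,\dots,n\}$, determinism gives $\mathbb{P}(\bm S_{\bm a}=f(t)_{\bm a}\mid T=t)=1$. In particular, for any single $\bm a\in\alpha$ (and $\alpha\neq\varnothing$ by definition of an antichain labelling an atom), the event $\bm{\mathfrak{s}}_{\bm a}$ already has conditional probability $1$ given $\mathfrak{t}$, so their union does too. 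Hence $p_{\bm S_\alpha\mid T}(\bm s_\alpha\mid t)=1$ and $\log_2(1/1)=0$, zeroing out every remaining summand.

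Assembling the two steps gives $I_\cap^{\mathrm{sx}-}(T:\bm S_\alpha)=0$. I do not expect any serious obstacle: the only subtlety is being careful with the shared-exclusion bookkeeping, namely to note that a union of events is dominated from below by any single one of its constituents, so one does not need a refined inclusion-exclusion argument about the intersection $\bigcap_{\bm a\in\alpha}\bm{\bar{\mathfrak{s}}}_{\bm a}$. One could alternatively write the proof in terms of the functional form $p_{\bm S_\alpha\mid T}(\bm s_\alpha\mid t)$ and note that $H(\bm S\mid T)=0$ under the hypothesis forces every conditional probability of an event determined by $\bm S$ to be $0$ or $1$, but the direct pointwise argument above is cleaner and localises the vanishing at the level of each sample.
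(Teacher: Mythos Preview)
Your proposal is correct and follows essentially the same approach as the paper: both arguments observe that when $\bm S=f(T)$, the conditional probability $p_{\bm S_\alpha\mid T}(\bm s_\alpha\mid t)$ degenerates to a $0/1$ quantity, forcing every term in the expectation defining $I_\cap^{\mathrm{sx}-}$ to vanish. Your version is slightly more explicit in first restricting to the support (the diagonal $\bm s=f(t)$) and then showing the conditional equals $1$ there via the union bound, whereas the paper simply states the conditionals are ``either one or zero'' and relies on the implicit $0\log 0$ convention; but this is a difference in presentation, not method.
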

\begin{proof}
If there exists a function $f: T \to \bm S$, all conditional probabilities of the form $p_{\bm S_\alpha\mid T}(\bm s_\alpha\mid t) = p_{\bm S_\alpha\mid T}((\bm s_{\bm a_{11}} \cap \bm s_{\bm a_{12}} \cap \dots) \cup \dots\mid t)$ are equal to 
either one or zero. Thus, $I_\cap^{\mathrm{sx}-}(T:\bm S_\alpha) = - \sum_{\bm s, t} p_{\bm S,T}(\bm s, t) \log_2 p_{\bm S_\alpha\mid T}(\bm s_\alpha\mid t) = 0$ for any $\alpha$.
\end{proof}

To uniquely determine the label from a one-hot representation, it is sufficient to observe the one neuron that is equal to one. However, one gets the same information by observing all neurons which are equal to zero, since, by exclusion, the last one then has to be one. Let $\alpha_j := \{\{j\}\{1, ..., j-1, j+1, ..., n\}\}$ be the antichain describing the redundant information between the $j$-th source and the rest of the sources taken together. Further, let $I(Y: \vec{Y})$ describe the mutual information between a random variable $Y$ and its one-hot representation, which is trivially equal to its entropy $H(Y)$ due to the construction of $\vec{Y}$.

\begin{lemma}
\label{lemma:local_red}
The size of the local $I^{\mathrm{sx}}_\cap$ redundancy $i_\cap^{\mathrm{sx}}(y_j:\vec{y}_{\alpha_j})$ is $- \log_2 p_Y(y)$.
\end{lemma}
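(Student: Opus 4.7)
The plan is to decompose the local redundancy into its informative and misinformative parts, dispose of the misinformative part immediately via \autoref{lemma:no_mis}, and then explicitly compute the informative part using the combinatorial structure of the one-hot encoding.

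First, since $\vec{Y}$ is by construction a deterministic function of $Y$, \autoref{lemma:no_mis} applies and gives $i_\cap^{\mathrm{sx}-}(y_j:\vec{y}_{\alpha_j}) = 0$. Hence it suffices to show that the informative part equals $-\log_2 p_Y(y_j)$. By definition,
\begin{equation*}
    i^{\mathrm{sx}+}_\cap(y_j:\vec{y}_{\alpha_j}) \;=\; \log_2 \frac{1}{1 - \mathbb{P}\!\left(\bm{\bar{\mathfrak{s}}}_{\{j\}} \cap \bm{\bar{\mathfrak{s}}}_{\{1,\dots,j-1,j+1,\dots,n\}}\right)}\,,
\end{equation*}
where $\bm{\mathfrak{s}}_{\{j\}}$ is the event that $\vec{Y}_j = 1$ (the realization of the $j$-th neuron in $\vec{y_j}$) and $\bm{\mathfrak{s}}_{\{1,\dots,j-1,j+1,\dots,n\}}$ is the event that all other neurons take their respective values in $\vec{y_j}$, i.e., all are $0$.

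The key combinatorial step is to identify the intersection of complements. Taking complements, $\bm{\bar{\mathfrak{s}}}_{\{j\}}$ is the event $\{\vec{Y}_j = 0\}$ and $\bm{\bar{\mathfrak{s}}}_{\{1,\dots,j-1,j+1,\dots,n\}}$ is the event that at least one of the remaining components equals $1$. Under a one-hot encoding exactly one component is $1$, so the intersection of these two events is precisely the event that the unique $1$ sits at some position $k \neq j$, which is equivalent to $\{Y \neq y_j\}$. Therefore
\begin{equation*}
    \mathbb{P}\!\left(\bm{\bar{\mathfrak{s}}}_{\{j\}} \cap \bm{\bar{\mathfrak{s}}}_{\{1,\dots,j-1,j+1,\dots,n\}}\right) \;=\; \mathbb{P}(Y \neq y_j) \;=\; 1 - p_Y(y_j)\,,
\end{equation*}
so that $i^{\mathrm{sx}+}_\cap(y_j:\vec{y}_{\alpha_j}) = \log_2(1/p_Y(y_j)) = -\log_2 p_Y(y_j)$. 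Combined with the vanishing of $i_\cap^{\mathrm{sx}-}$, the claim follows.

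The only subtle point that needs careful handling is the translation between the event notation $\bm{\mathfrak{s}}_{\bm a}$ used in the definition of $I^{\mathrm{sx}}_\cap$ and the concrete indicator structure of the one-hot encoding; once one recognizes that ``neuron $j$ is not $1$'' together with ``some other neuron is $1$'' collapses to ``$Y \neq y_j$'' under the one-hot constraint, the rest is an immediate substitution. Everything else is direct application of the definitions just recalled above.
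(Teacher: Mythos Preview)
Your proof is correct and follows essentially the same route as the paper: invoke \autoref{lemma:no_mis} (using that $\vec{Y}$ is a function of $Y$) to kill the misinformative part, then evaluate the informative part $i_\cap^{\mathrm{sx}+}$ explicitly from the one-hot structure. The only cosmetic difference is that you compute $\mathbb{P}(\bigcap_{\bm a\in\alpha_j}\bar{\bm{\mathfrak{s}}}_{\bm a})$ directly as $\mathbb{P}(Y\neq y_j)$, whereas the paper passes (via the shorthand $p_{\bm S_\alpha}$) to the complementary union $\mathbb{P}(\bigcup_{\bm a\in\alpha_j}\bm{\mathfrak{s}}_{\bm a})=p_Y(y_j)$; these are the same computation up to De~Morgan.
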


\begin{proof}
Since the one-hot representation is a bijective function of the variable, \autoref{lemma:no_mis} implies that the misinformative part of the redundancy vanishes. The local $I^{\mathrm{sx}}_\cap$ redundancy in question then amounts to $i_\cap^{\mathrm{sx}}(y_j:\vec{y}_{\alpha_j}) = i_\cap^{\mathrm{sx}+}(y_j:\vec{y}_{\alpha_j}) = - \log_2 p_{Y}(y_j \cup (y_1 \cap \dots \cap y_{j-1} \cap y_{j+1} \cap \dots \cap y_n)) = - \log_2 p_{Y}(y_j \cup y) = - \log_2 p_{Y}(y)$.
\end{proof}

The atoms are ordered on a lattice by the partial ordering relation between atoms with antichain $\alpha$ and $\beta$ being given by \citep{williams_nonnegative_2010}
\begin{equation}
    \alpha \preceq \beta \Leftrightarrow \forall\bm{b}\in\beta~\exists\bm{a}\in\alpha~\text{such that}~\bm{a}\subseteq\bm{b}\,.
\end{equation}
The atoms are, then, computed as the Moebius inversion of the corresponding redundancies on the lattice, which is referred to throughout the literature as the Redundancy Lattice.

\begin{lemma}
\label{lemma:monotonicity}
The degree of synergy $m\in \mathbb{N}$ increases monotonically on the redundancy lattice, i.e.,~$\alpha \preceq \beta \Rightarrow m(\alpha) \leq m(\beta)$.
\end{lemma}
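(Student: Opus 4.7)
The plan is to unpack the definitions and exhibit a single witness. Let $\alpha, \beta$ be antichains with $\alpha \preceq \beta$, i.e.\ for every $\bm{b}\in\beta$ there exists $\bm{a}\in\alpha$ with $\bm{a}\subseteq\bm{b}$. I want to show $m(\alpha) = \min_{\bm{a}\in\alpha}|\bm{a}| \leq \min_{\bm{b}\in\beta}|\bm{b}| = m(\beta)$.

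First, pick an element $\bm{b}^{*}\in\beta$ that achieves the minimum, so $|\bm{b}^{*}| = m(\beta)$. By the definition of $\preceq$, the assumption $\alpha \preceq \beta$ guarantees the existence of some $\bm{a}^{*}\in\alpha$ with $\bm{a}^{*}\subseteq\bm{b}^{*}$. Since set inclusion implies inequality of cardinalities for finite sets, $|\bm{a}^{*}|\leq |\bm{b}^{*}| = m(\beta)$.

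Finally, because $\bm{a}^{*}\in\alpha$, the minimum over $\alpha$ satisfies $m(\alpha)\leq |\bm{a}^{*}|\leq m(\beta)$, which is the claim. Chaining these inequalities gives
\begin{equation*}
    m(\alpha) \;\leq\; |\bm{a}^{*}| \;\leq\; |\bm{b}^{*}| \;=\; m(\beta).
\end{equation*}

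There is no real obstacle here: the proof is a straightforward ``minimum over a larger family of witnesses'' argument. The only subtlety worth flagging is that $m(\beta)$ is attained (since $\beta$ is finite), which lets me pick a concrete minimizer $\bm{b}^{*}$ rather than argue with infima; otherwise one would have to thread the inequality through an arbitrary $\bm{b}\in\beta$ and then take the infimum on the right-hand side, which works identically.
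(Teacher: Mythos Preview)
Your proof is correct and follows essentially the same approach as the paper: pick a minimizer $\bm{b}^{*}\in\beta$, use $\alpha\preceq\beta$ to find $\bm{a}^{*}\in\alpha$ with $\bm{a}^{*}\subseteq\bm{b}^{*}$, and chain $m(\alpha)\leq|\bm{a}^{*}|\leq|\bm{b}^{*}|=m(\beta)$. The only differences are cosmetic (your explicit remark on finiteness and attainment of the minimum).
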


\begin{proof}
Let $\alpha, \beta \in \mathcal{P}(\mathcal{P}(\{1, \dots, n\}))$ be two antichains on the $n$ redundancy lattice such that $\alpha \preceq \beta$. By definition of the degree of synergy (\autoref{eq:m}), there exists a set $\bm{b} \in \beta$ such that $m(\beta) = |\bm{b}|$. For this $\bm{b}$, it follows from the definition of the partial order of the antichains that there must then also exist a set $\bm{a} \in \alpha$ for which $\bm{a} \subseteq \bm{b}$. Thus, $\alpha \preceq \beta \Rightarrow m(\alpha) \leq |\bm a| \leq |\bm b| = m(\beta)$.
\end{proof}

\begin{theorem}
The representational complexity of a categorical random variable $Y$ and its one-hot representation is equal to one.
\end{theorem}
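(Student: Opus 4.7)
The plan is to prove $C(Y:\vec{Y}) = 1$ by showing that every PID atom with degree of synergy $m(\alpha) \geq 2$ vanishes, reducing the weighted average in the definition of $C$ to $I(Y:\vec{Y})/I(Y:\vec{Y}) = 1$. The key structural observation is that for a one-hot encoding the local information about each realization $y_j$ is fully carried by the antichain $\alpha_j = \{\{j\},\{1,\ldots,\hat{j},\ldots,n\}\}$, which has $m(\alpha_j) = 1$.

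The first step is to apply \autoref{lemma:no_mis}: since the one-hot map is a deterministic bijection, the misinformative part of the local redundancy vanishes everywhere, so $i^{\mathrm{sx}}_\cap(y:\vec{y}_\alpha) = -\log_2 p_{\vec{Y}_\alpha}(\vec{y}_\alpha)$ throughout. The second step is to generalize \autoref{lemma:local_red}: for each realization $y_j$ and every antichain $\alpha \succeq \alpha_j$, the local redundancy again equals $-\log_2 p_Y(y_j)$. This follows because the condition $\alpha_j \preceq \alpha$ forces each $\bm{b} \in \alpha$ either to contain $j$---so that matching $\vec{y}_j|_{\bm{b}}$ pins $\vec{Y}_j = 1$ and hence $Y = y_j$---or to contain $\{1,\ldots,\hat{j},\ldots,n\}$, which combined with the antichain property restricts $\bm{b}$ to $\{1,\ldots,\hat{j},\ldots,n\}$ or $\{1,\ldots,n\}$, both again producing $\{Y = y_j\}$; the union over $\bm{b} \in \alpha$ therefore collapses to $\{Y = y_j\}$ with probability $p_Y(y_j)$.

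The third step is to translate the constancy of redundancies on the upset of $\alpha_j$ into the vanishing of the expected PID atoms. The Moebius telescoping identity
\begin{equation*}
\sum_{\beta \preceq \alpha,\ \beta \not\preceq \alpha_j} \pi(y_j:\vec{y}_\beta) \;=\; i^{\mathrm{sx}}_\cap(y_j:\vec{y}_\alpha) - i^{\mathrm{sx}}_\cap(y_j:\vec{y}_{\alpha_j}) \;=\; 0
\end{equation*}
holds for every $\alpha \succeq \alpha_j$. Combined with an induction on the lattice structure above $\alpha_j$ and a parallel analysis of local atoms at antichains incomparable to $\alpha_j$, this forces the individual local atoms $\pi(y_j:\vec{y}_\alpha)$ for $\alpha \succ \alpha_j$ to vanish; taking expectations and invoking \autoref{lemma:monotonicity} then gives $\Pi(Y:\vec{Y}_\alpha) = 0$ for every $\alpha$ with $m(\alpha) \geq 2$, and therefore $C = 1$.

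The main obstacle is the third step. The telescoping identity controls only sums of local atoms, not individual terms, and for $n \geq 4$ there exist antichains with $m(\alpha) \geq 2$ that are not above $\alpha_j$ for some $j$---the simplest example being $\alpha = \{\{2,3\}\}$ and $j = 1$ when $n = 4$, since neither $\{1\}$ nor $\{2,3,4\}$ is contained in $\{2,3\}$. For such cases the desired $\Pi(Y:\vec{Y}_\alpha) = 0$ cannot be obtained term by term and must instead arise from cancellations across realizations in $\sum_j p_Y(y_j)\, \pi(y_j:\vec{y}_\alpha)$; handling these cancellations rigorously requires careful bookkeeping of Moebius coefficients on the full redundancy lattice and exploitation of the symmetric structure of the one-hot code.
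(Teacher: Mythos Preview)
Your proposal has a genuine gap in the third step, and you have correctly identified it yourself. The paper's proof avoids this difficulty entirely by invoking a property of the $I^{\mathrm{sx}}_\cap$ measure that you overlook: once \autoref{lemma:no_mis} guarantees that the misinformative part vanishes, \emph{every individual local atom} $\pi(y_j:\vec{y}_\beta)$ is non-negative. This is because the local atoms inherit the additive split $\pi = \pi^+ - \pi^-$ from the redundancies, with both parts non-negative in the $I^{\mathrm{sx}}_\cap$ framework; when $\pi^- = 0$ one is left with $\pi = \pi^+ \geq 0$.

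With non-negativity in hand, the argument collapses to two lines. Take $\alpha$ equal to the top element of the lattice in your own telescoping identity; this yields
\[
\sum_{\beta \not\preceq \alpha_j} \pi(y_j:\vec{y}_\beta) \;=\; i(y_j:\vec{y}_j) - i^{\mathrm{sx}}_\cap(y_j:\vec{y}_{\alpha_j}) \;=\; 0,
\]
and since every summand is non-negative, each $\pi(y_j:\vec{y}_\beta)$ with $\beta \not\preceq \alpha_j$ vanishes individually---including all the incomparable antichains that worried you. The surviving atoms satisfy $\beta \preceq \alpha_j$ and hence $m(\beta) \leq m(\alpha_j) = 1$ by \autoref{lemma:monotonicity}, so the local (and therefore the expected) representational complexity equals one. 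No induction over the lattice, no separate treatment of incomparable antichains, and no cancellation across realizations is required.

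A smaller remark: your Step~2, extending \autoref{lemma:local_red} to every $\alpha \succeq \alpha_j$, is not needed. The paper uses only the single equality $i^{\mathrm{sx}}_\cap(y_j:\vec{y}_{\alpha_j}) = -\log_2 p_Y(y_j) = i(y_j:\vec{y}_j)$ at $\alpha_j$ itself; the rest follows from non-negativity.
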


\begin{proof}
The local mutual information of the event $(T=t, S=s)$ amounts to $i(y: \vec{y}) = - \log_2(p_{Y}(y))$ and is thus equal to the local redundancy $i_\cap^{\mathrm{sx}}(y_j:\vec{y}_{\alpha_j})$ (\autoref{lemma:local_red}). Because all local atoms are non-negative (\autoref{lemma:no_mis}), all local atoms $\pi$ with antichains $\beta$ succeeding $\alpha_j$ must be zero. Since $m(\alpha_j) = 1$ and $m(\beta) = 1$ for all $\beta \preceq \alpha$ (\autoref{lemma:monotonicity}), the representational complexity of the one-hot encoding is $C = 1$.
\end{proof}
\subsection{DNN implementation}
\label{apx:implementation}

\subsubsection{MNIST feedforward deep neural network}

The MNIST networks analyzed in this paper are fully-connected feed-forward deep neural networks with quantized activation values, but float-precision weights. These networks are trained on the $60000$ $28\mathrm{x}28$ grayscale pictures of handwritten digits of the training set of the MNIST dataset \citep{lecun_gradient-based_1998}. To get better statistics for the test error as well as better estimates of information-theoretic quantities on the test set, the additional $50000$ test samples of the QMNIST dataset have been added to the $10000$ regular MNIST test set samples~\citep{yadav_cold_2019}. The networks use $\mathrm{tanh}$ activation functions on the hidden layers, while on the output layer employing a $\mathrm{softmax}$ (for one-hot output layer) or $\mathrm{sigmoid}$ (for binary output layer).

Three different networks have been trained with $20$ different random weight initializations each: Figures \ref{fig:network_repr_compl}.B and \ref{fig:repr_compl_binary_and_cifar}.A have been computed on networks trained and evaluated with eight quantization levels per neuron but with different output layer representations: While the networks represented in \autoref{fig:network_repr_compl}.B have a ten-neuron one-hot output layer, the networks whose results are depicted in \autoref{fig:repr_compl_binary_and_cifar}.A have only four output neurons of which each represents one bit of the binary representation of the numeric label. While the networks shown in Figures \ref{fig:coarsegraining_subsampling}.B and D share the same network architecture as the ones in \autoref{fig:network_repr_compl}, they have been trained and evaluated with only four quantization levels to make the computation of the coarse-grained PID more efficient.

Finally, the MNIST network with randomly assigned labels, as depicted in \autoref{fig:repr_compl_overfitting} has the same network structure as the default MNIST network and has been trained for $5$ different weight initializations with a different random shuffling of the train and test labels each. There we use eight quantization levels. 

For all networks, we used stochastic gradient descent with a batch size of $64$, learning rate of $0.01$ and Xavier weight initialization. The networks with one-hot output representation employ a cross-entropy loss, while for the binary representation, a mean square error loss was chosen. Parameters and accuracies of the networks are summarized in \autoref{tab:networks}.

\begin{table}[ht]
    \centering
    \begin{tabular}{c|c|c|c|c|c}
        Setup & \#runs & \#quantization levels & Output rep. (\#neurons) & train acc. & test acc.\\
        \hline
        Default & 20 & 8 & one-hot (10)&99.96(2)&95.0(4)\\
        Binary Output & 20 & 8 & binary (4)&99.59(4)&95.4(2)\\
        Reduced & 20 & 4 & one-hot (10)&99.80(8)&94.7(4) \\
        Shuffled Labels & 5 & 8 & one-hot (10) & 30(3) & 10.2(2)
        
    \end{tabular}
    \caption{Network parameters and final accuracies for the three MNIST fully-connected DNNs referenced in this paper.}
    \label{tab:networks}
\end{table}

\subsubsection{CIFAR10 convolutional neural network}

For the more complex CIFAR10 image classification task \citep{krizhevsky2009learning} we employed a larger feed-forward neural network with convolutional layers. The $50000$ $32\mathrm{x}32\mathrm{x}3$ pictures are fed into three convolutional layers with $32$, $64$ and $128$ filters, respectively and a kernel size of $3$. The layers employ max pooling and ReLU activation functions. Afterwards, the result is flattened to $2048$ numbers and fed into the fully-connected part of the network consisting of four layers with a width of $128$, $32$, $5$, and $5$, respectively, with tanh activation functions. These fully-connected layers are quantized using $8$ quantization levels. Finally, the results are fed into the $10$ neuron one-hot output layer.

Similar to the MNIST networks, the CIFAR10 network has been trained using stochastic gradient descent with a batch size of $64$, a learning rate of $0.005$, Xavier weight initialization and cross-entropy loss.

\subsubsection{Quantization schemes}
\label{apx:quantization}

In order to limit the information capacity of the networks, the activation values have been quantized to very few discrete values. For evaluation, the quantized activations $\ell$ are computed from the continuous values $\hat{\ell}$ as

\begin{equation}
    \ell = \epsilon  \left\lfloor \frac{\hat{\ell} - \sigma_\mathrm{min}}{\epsilon} \right\rceil + \sigma_\mathrm{min}\,,
\end{equation}

where $\lfloor \cdot \rceil$ denotes rounding to the closest integer, the bin size $\epsilon$ is given by $\epsilon = (\sigma_\mathrm{max} - \sigma_\mathrm{min})/(n_\mathrm{bins} - 1)$ and $\sigma_\mathrm{min}$ and $\sigma_\mathrm{max}$ are the bounds of the activation function. This quantization scheme has been chosen as it reproduces the bounds exactly, i.e.,~$\hat{\ell} = \sigma_\mathrm{min/max} \to \ell = \sigma_\mathrm{min/max}$, and limits the rounding error to $\epsilon/2$.

For the training phase, the activation values of the forward pass are stochastically to make the training more robust. The stochastic scheme builds on the deterministic scheme presented before in that it rounds the values to the same value. However, whether values are rounded up or down to the nearest rounding point is no longer deterministic but given by a probability scaling linearly with the distance from the next two rounding points, giving

\begin{equation}
    \ell = \epsilon\lambda + \sigma_\mathrm{min}\quad \text{where}\quad \lambda = 
    \begin{cases} 
        \left\lceil \frac{\hat{\ell} - \sigma_\mathrm{min}}{\epsilon} \right\rceil & r > \left(\frac{\hat{\ell} - \sigma_\mathrm{min}}{\epsilon} \mod 1\right)\\[0.5cm]
      \left\lfloor \frac{\hat{\ell} - \sigma_\mathrm{min}}{\epsilon} \right\rfloor & r \leq \left(\frac{\hat{\ell} - \sigma_\mathrm{min}}{\epsilon} \mod 1 \right),
   \end{cases}
\end{equation}
and $r \in [0, 1]$ is drawn i.i.d. from a uniform distribution for each neuron and each evaluation.

\newpage

\subsection{Additional results}
\label{apx:additional_results}

\begin{figure}[h]
    \centering
    \includegraphics{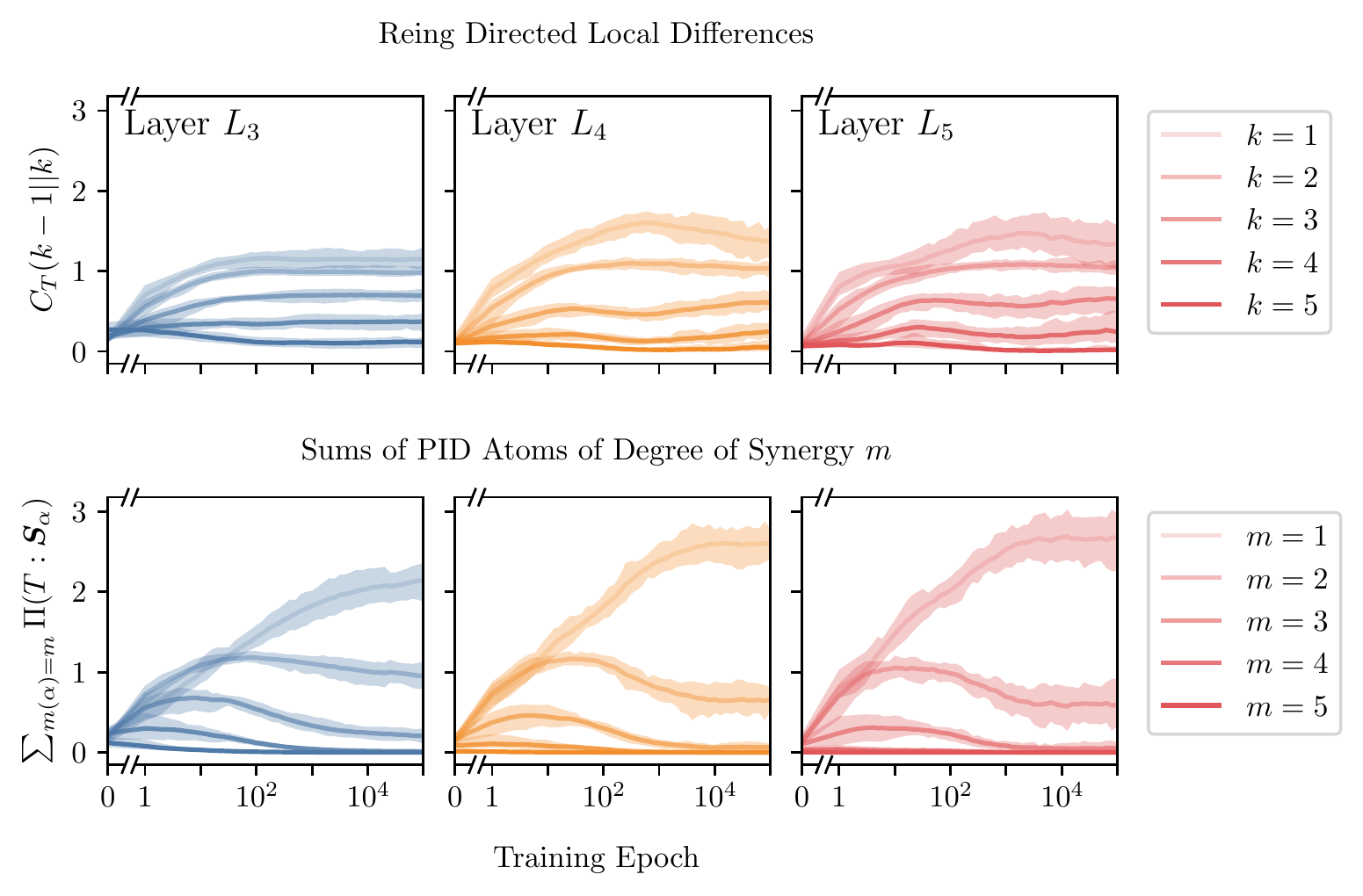}
    \caption{\textbf{Summing up PID atoms of a fixed degree of synergy shows more information is stored across single neurons than estimated by the directed local differences by \citet{reing_discovering_2021}}. The figure compares the sum of PID atoms of a certain degree of synergy (bottom row) with the directed local differences (top row) on three layers of the MNIST classifier network. Note that both decompositions for a given layer sum up to the total mutual information with the label.}
    \label{fig:reing_deg_syn_comparison}
\end{figure}

\begin{figure}
    \centering
    \includegraphics[scale=1.2]{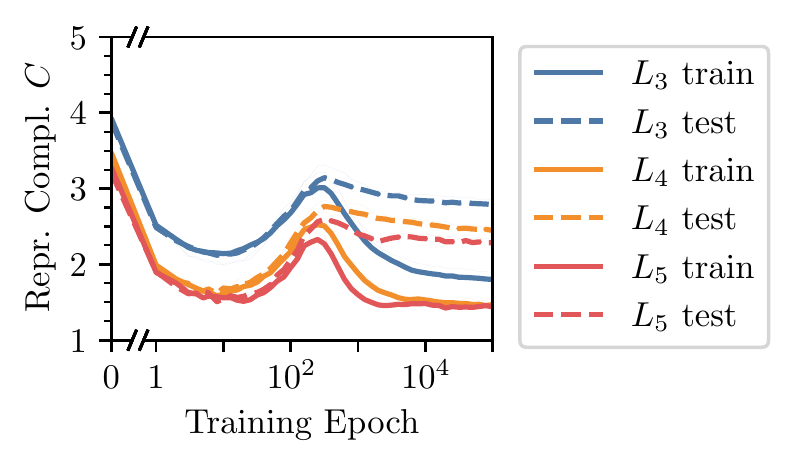}
    \caption{\textbf{Detailed comparison between the average representational complexity over 5 runs for the shuffled-label task described in \autoref{sec:testset} and \autoref{fig:repr_compl_overfitting}}. Shown are only the averages, but not the confidence intervals for better visibility. It is apparent that on this task that can only be solved by memorization of the random labels, the representation of the train dataset is much less synergistic than the one of the test dataset.}
    \label{fig:shuffled_train_test_comparison}
\end{figure}

\begin{figure}[t]
    \centering
    \includegraphics[scale=1.0]{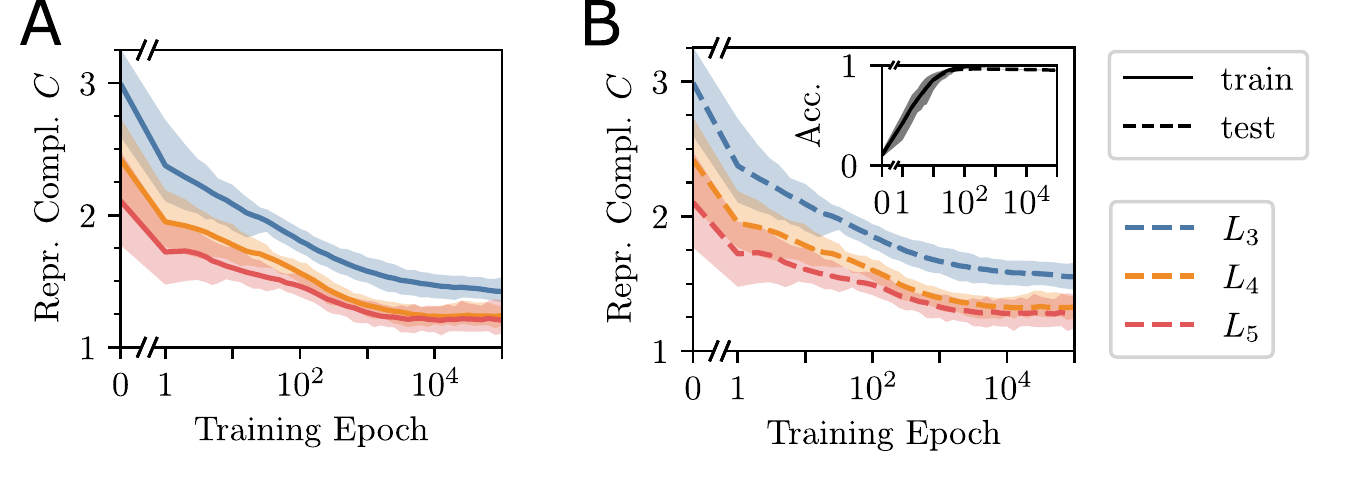}
    \caption{\textbf{Representational complexity computed on the train (A) and test (B) datasets follow the same general trend.} (A) is a copy of \autoref{fig:network_repr_compl}B and was put here to allow easier comparison between the results on train and test datasets. The setup for this experiment is described in the caption of \autoref{fig:network_repr_compl}, but also in \autoref{tab:networks}. The comparison between train and test datasets can be found in the main text in \autoref{sec:testset}.}
    \label{fig:train_test_combined}
\end{figure}

\begin{figure}
    \centering
    \includegraphics[scale=1.2]{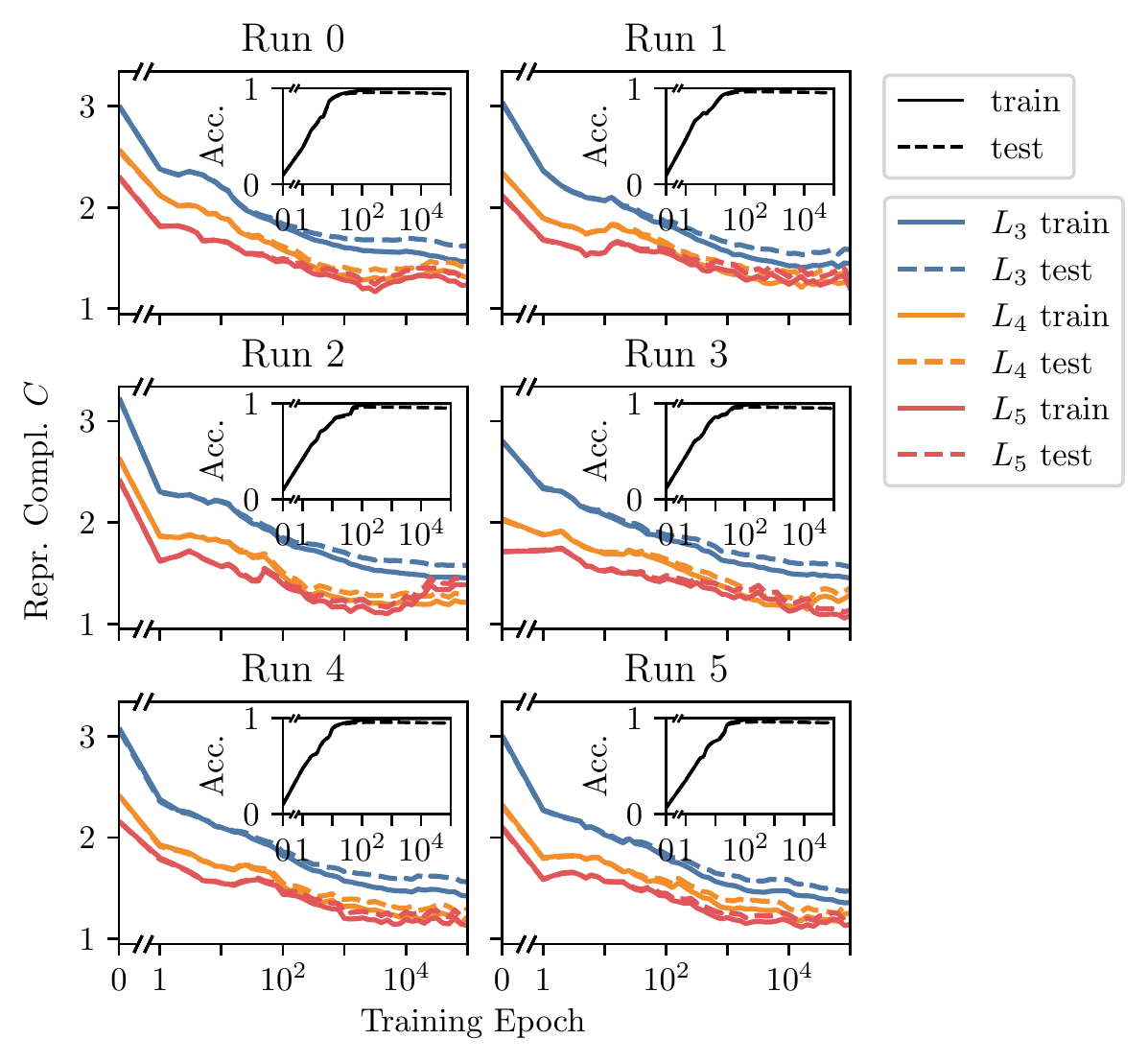}
    \caption{\textbf{Individual runs of the default MNIST setup show a consistent deviation between the representational complexity computed on the train and test datasets starting at around 100 epochs into training.} Shown are the first 6 out of overall 20 training runs with the same settings but different seeds.}
    \label{fig:test_train_individual_runs}
\end{figure}

\end{document}